\pdfoutput=1

\documentclass[11pt]{article}

\usepackage[left=1.25in, top=1in, bottom=1in, right=1.25in]{geometry}
\parskip 2.mm
\parindent 0.mm

\usepackage[colorlinks=True,citecolor=ForestGreen]{hyperref}
\usepackage[usenames,dvipsnames]{xcolor}
\usepackage[switch,mathlines]{lineno}
\usepackage{natbib}
\usepackage{newtxtext}

\usepackage{latexsym}

\usepackage[T1]{fontenc}

\usepackage[utf8]{inputenc}

\usepackage{microtype}

\usepackage{inconsolata}

\usepackage{url}            %
\usepackage{booktabs}       %
\usepackage{amsfonts}       %
\usepackage{nicefrac}       %

\usepackage{subcaption}
\usepackage{algorithm}
\usepackage{algorithmic}

\usepackage[utf8]{inputenc} %
\usepackage[T1]{fontenc}    %
\usepackage{url}            %
\usepackage{booktabs}       %
\usepackage{amsfonts}       %
\usepackage{nicefrac}       %
\usepackage{microtype}      %
\usepackage{dsfont}
\usepackage{soul}
\usepackage{natbib}

\usepackage{float}
\usepackage[font=footnotesize,format=hang]{caption}
\usepackage{graphicx}
\usepackage[export]{adjustbox}
\usepackage[inline]{enumitem}

\usepackage{multirow}

\usepackage{tikz}
\usetikzlibrary{positioning}

\usepackage{amsmath,amssymb,amsthm}
\usepackage[mathscr]{eucal}
\usepackage{stmaryrd}
\usepackage{accents}
\usepackage{upgreek}

\usepackage[most]{tcolorbox}
\usepackage{setspace}

\usepackage{booktabs,colortbl,multirow}

\usepackage{pifont}

\graphicspath{{figs//}}

\allowdisplaybreaks

\newcommand{\argmin}{{\operatorname{argmin }}}

\newcommand{\R}{\mathbf{R}}

\newcommand{\Real}{\mathbb{R}}

\renewcommand{\R}{\mathbb{R}}

\newcommand{\best}[1]{{\cellcolor{gray!25}{#1}}}

\newcommand{\cneeded}[2]{\textcolor{orange}{[citation-needed]}}

\usepackage[scaled=0.8]{FiraMono}
\usepackage[capitalize,noabbrev]{cleveref}

\theoremstyle{plain}
\newtheorem{theorem}{Theorem}[section]
\newtheorem{proposition}[theorem]{Proposition}
\newtheorem{lemma}[theorem]{Lemma}

\theoremstyle{definition}

\theoremstyle{remark}

\usepackage[textsize=tiny]{todonotes}

\newcommand{\mytitle}{Efficient Document Ranking with Learnable Late Interactions}

\title{\mytitle{}}

\author{
Ziwei Ji \and Himanshu Jain \and Andreas Veit \and Sashank J. Reddi \and Sadeep Jayasumana \and Ankit Singh Rawat \and Aditya Krishna Menon \and Felix Yu \and Sanjiv Kumar \\ \\
Google \\
\texttt{\{ziweiji,himj,aveit,sashank,sadeep,ankitsrawat,\phantom{\}}}\\ \texttt{\phantom{\{}adityakmenon,felixyu,sanjivk\}\ @google.com}
}

\begin{document}
\maketitle

\begin{abstract}
    Cross-Encoder (CE) and Dual-Encoder (DE) models are two fundamental approaches for 
predicting
query-document relevance in information retrieval. 
To predict relevance, 
CE models use \emph{joint} query-document embeddings, 
while
DE models maintain \emph{factorized} query-document embeddings;
usually, the former has higher quality while
the latter has lower latency.
Recently, \emph{late-interaction} models have been proposed to realize more favorable latency-quality trade-offs, by using a DE structure followed by a lightweight scorer based on query and document token embeddings. 
However, 
these lightweight scorers are often hand-crafted, and there is no understanding of their approximation power;
further, such scorers require access to individual document token embeddings, which imposes an increased latency and storage burden over DE models.
In this paper, we propose novel \emph{learnable} late-interaction models (LITE) that resolve these issues.
Theoretically,
we prove that LITE is a universal approximator of continuous scoring functions, 
even for relatively small embedding dimension.
Empirically, LITE outperforms previous late-interaction models such as ColBERT on both in-domain and zero-shot re-ranking tasks.
For instance, experiments on MS MARCO passage re-ranking show that
LITE not only yields a model with better generalization, but also lowers latency and requires $0.25\times$ storage compared to ColBERT.

\end{abstract}

\section{Introduction}

\begin{figure*}[!t]
    \begin{subfigure}{0.3\textwidth}
        \vspace{19mm}
        \centering
        \includegraphics[scale=0.35]{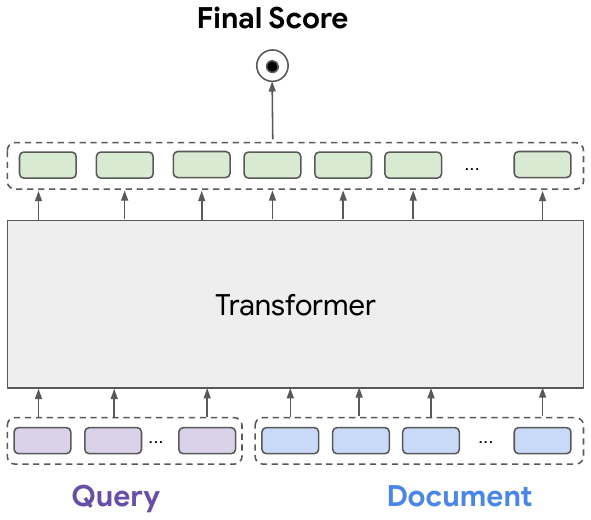}
        \caption{Cross Encoder (CE)}
        \label{fig:ce}
    \end{subfigure}%
    \begin{subfigure}{0.3\textwidth}
        \vspace{12.5mm}
        \centering
        \includegraphics[scale=0.35]{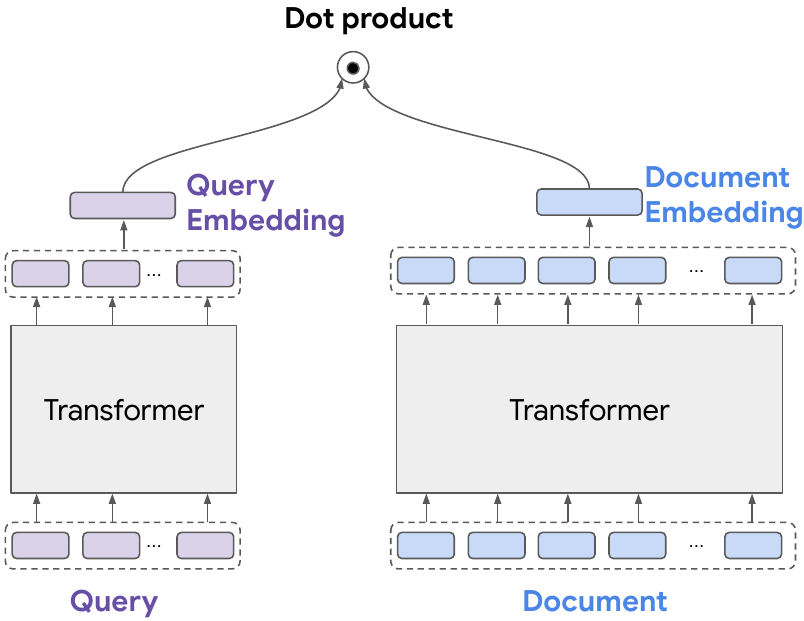}
        \caption{Dual Encoder (DE)}
        \label{fig:de}
    \end{subfigure}%
    \qquad
    \begin{subfigure}{0.3\textwidth}
        \centering
        \includegraphics[scale=0.35]{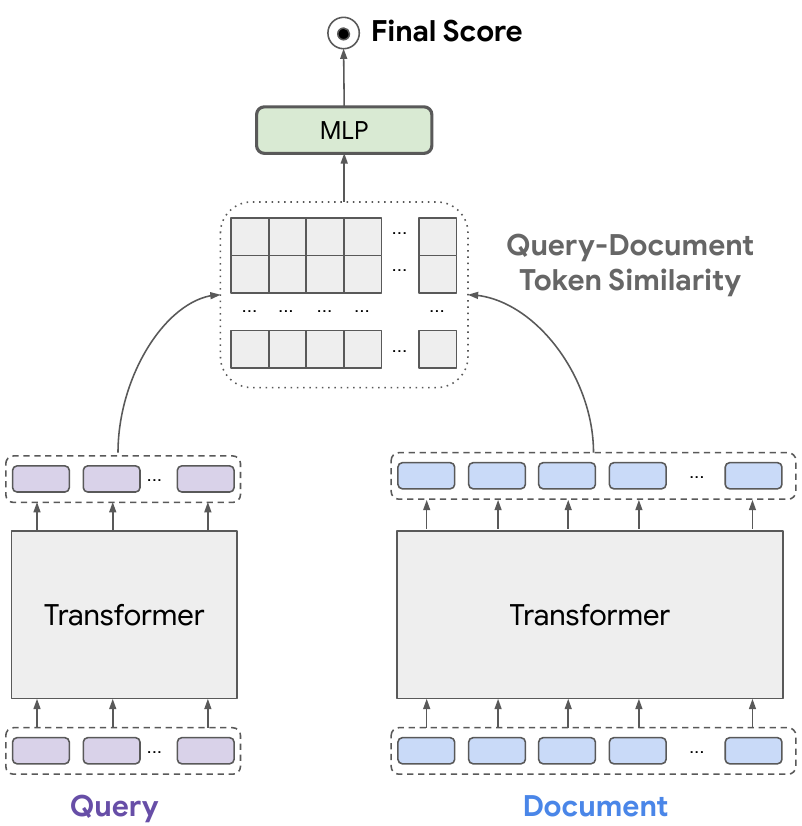}
        \caption{LITE}
        \label{fig:lite}
    \end{subfigure}
    \caption{Illustration of different query-document relevance models. (a) CE  models compute a joint query-document embedding by passing the concatenated query/document tokens through a single Transformer. (b) In DE models, query and document embeddings are computed separately with their respective Transformers and the relevance score is the dot product of these embeddings. (c) In the proposed LITE method, query and document token embeddings are computed similarly to DE, but instead of a dot product, we first compute the similarity matrix between each pair of query and document tokens, and pass this matrix through an MLP to produce the final relevance score.}
    \vspace{-\baselineskip}
\end{figure*}

Transformers~\citep{Vaswani:2017} have emerged as a successful model for information retrieval problems, where the goal is to retrieve and rank relevant documents for a given query~\citep{Nogueira:2019}.
Two families of Transformer-based models are popular: 
\textit{cross-encoder} (CE) and \textit{dual-encoder} (DE) models.
Given a (query, document) pair, 
CE models operate akin to a BERT-style encoder~\citep{Devlin:2019}: the query and document are concatenated, and sent to a Transformer encoder which outputs a relevance score (cf.~Figure~\ref{fig:ce}).
CE models can learn complex query-document relationships, as they allow for cross-interaction between query and document tokens. 

By contrast, DE models apply two separate Transformer encoders to the query and document, respectively, producing separate query and document embedding vectors~\citep{Reimers:2019}.
The dot product of these two vectors is used as the final relevance score~(cf.~Figure~\ref{fig:de}).
Compared to CE models,
DE models are usually less accurate~\citep{Hofstatter:2020},
since the only interaction between the query and document occurs in the final dot product.
However, DE models have much lower latency,
since all the document embedding vectors can be pre-computed offline.

Recently, \textit{late-interaction} models have provided alternatives with a more favorable latency-quality trade-off compared to CE and DE models.
Similarly to DE models, late-interaction models also use a two-Transformer structure, but they store more information and employ additional nonlinear operations to calculate the final score.
In particular, let $\mathbf{Q}\in\R^{P\times L_1}$ and $\mathbf{D}\in\R^{P\times L_2}$ denote the query and document token embeddings output by the two Transformers, i.e., there are $L_1$ query token embedding vectors and $L_2$ document token embedding vectors of dimension $P$.
DE models simply pool $\mathbf{Q}$ and $\mathbf{D}$ into two vectors, and take the dot product.
By contrast, ColBERT \citep{Khattab:2020} calculates the (token-wise) similarity matrix $\mathbf{Q}^\top\mathbf{D}$ and computes the final score via a sum-max reduction $\sum_i\max_j(\mathbf{Q}^\top\mathbf{D})_{i,j}$.

While the sum-max score reduction lets ColBERT achieve better accuracy than DE, it is unclear whether this hand-crafted reduction can capture arbitrary complex query-document interactions.
Moreover, ColBERT can have higher latency than DE: calculating the similarity matrix $\mathbf{Q}^\top\mathbf{D}$ requires $L_1 \cdot L_2$ dot products, while the DE model only requires one dot product.
Additionally, to reduce online latency, ColBERT needs to pre-compute and store the Transformer embedding matrix $\mathbf{D}$ for each document~\cite{Hofstatter:2020,Santhanam:2022}.
This can entail significant storage space if we decide to store a large number of document tokens, since there can be billions of documents in industry-scale information retrieval systems~\citep{zhang2013image,overwijk2022clueweb22}.
(See \Cref{sec:limitations} for a detailed discussion.)

To reduce latency and storage cost, one may seek to store fewer document tokens, and/or reduce the dimension of each token embedding vector.
However, it is unclear how these influence performance.
In fact, such reduction can significantly hurt the accuracy of ColBERT, as we show in \Cref{sec:limited_storage}.

\textbf{Contributions.}
In this work, we propose \textit{\textbf{li}ghtweight scoring with \textbf{t}oken \textbf{e}insum} (LITE), which addresses the aforementioned shortcomings of existing late-interaction models.
LITE applies a \emph{lightweight and learnable non-linear transformation} on top of Transformer encoders, which corresponds to processing the (token-wise) similarity matrix $\mathbf{S} = \mathbf{Q}^\top\mathbf{D}$ via shallow multi-layer perceptron (MLP) layers (cf.~Figure~\ref{fig:lite} and \Cref{sec:lite}).
In particular, we focus on a \textit{separable LITE} scorer which applies two shared MLPs to the rows and the columns of $\mathbf{S}$ (in that order), and then projects the resulting matrix to a single scalar.

Theoretically, we rigorously establish the expressive power of LITE: we show that LITE is a universal approximator of continuous scoring functions in $\ell_2$ distance, even under tight storage constraints (cf.~\Cref{fact:lite_univ_approx}).
To our knowledge, this is the \emph{first formal result about the approximation power of late-interaction methods}.
Further, we also construct a scoring function that cannot be approximated by a DE model with restricted embedding dimension (cf. \Cref{fact:de_neg}).

Empirically, we show that LITE can systematically improve upon existing late-interaction methods like ColBERT on both in-domain benchmarks such as MS MARCO and Natural Questions (cf. \Cref{tab:msmarco_mrr}), and out-of-domain benchmarks such as BEIR (cf. \Cref{tab:beir_ndcg}).
Moreover, LITE can be much more accurate than ColBERT while having lower latency and storage cost (cf. \Cref{tab:latency}).

\section{Background}

Given a query $q \in \mathscr{Q}$, the goal of information retrieval~\citep{Mitra:2018} is to identify the set of \textit{relevant documents} from some corpus $\mathscr{D}$.
Typically, $| \mathscr{D} |$ is large (e.g., $\mathcal{O}(10^9)$), while the number of relevant documents is small (e.g., $\mathcal{O}(10)$).
A classical strategy employs a two-phase approach: in the \textit{retrieval} phase, for moderate $K$ (e.g., $\mathscr{O}( 10^3 )$), one retrieves the top-$K$ documents based on a scoring function $s_{\rm ret} \colon \mathscr{Q} \times \mathscr{D} \to \Real$. %
These retrieved documents may potentially include some irrelevant documents.
In the \textit{re-ranking} phase, one applies $s_{\rm rr} \colon \mathscr{Q} \times \mathscr{D} \to \Real$ to re-score the $K$ documents, and keep the top scoring ones.

While $s_{\rm ret}$ and $s_{\rm rr}$ both score query-document relevance, they are often implemented via fundamentally different techniques.
Efficiency is more important for $s_{\rm ret}$ since we need to evaluate it over \textit{all} documents; models such as TF-IDF and BM25~\citep{Robertson:2009} and approximate nearest neighbor search \citep{guo2016quantization,johnson2019billion,Guo:2020} are used for this purpose.
On the other hand, in the second phase we usually only need to re-score a few ($K \ll | \mathscr{D} |$) documents, and thus we can usually get higher accuracy by using more expensive models for $s_{\rm rr}$.
In this work, we focus on re-ranking.

\subsection{Cross- and Dual-Encoders}

\textit{Transformers}~\citep{Vaswani:2017} have been explored for both retrieval and re-ranking.
Given a finite set $\mathscr{X}$, a Transformer is a function $T \colon \mathscr{X}^L \to \Real^{P \times L}$, where $L$ is the sequence length and $P$ is the embedding size of each token in the sequence.
A simplified Transformer network is introduced in \Cref{sec:lite_univ_approx} and used in our universal approximation results; for more details, we refer the readers to \citep{Vaswani:2017,Devlin:2019}.

To estimate query-document relevance via Transformers, one first \textit{tokenizes} the query and document (e.g., using a SentencePiece tokeniser~\citep{Kudo:2018}) into $q = (q_1, \ldots, q_{L_1})$ and $d = (d_1, \ldots, d_{L_2})$.
There are then two basic strategies.
In \textit{cross-encoder} (CE) models~\citep{Nogueira:2019}, we apply a single Transformer to the concatenation of $q$ and $d$, and estimate relevance with learned weights $\mathbf{w}$:
\begin{equation}
    \label{eqn:ce}
    s( q, d ) = \mathbf{w}^\top {\sf pool}( T( {\sf concat}(q, d) ) ),
\end{equation}
where ${\sf pool}$ denotes a pooling strategy by which we reduce a sequence of Transformer token embeddings into a single vector.
CE models can often achieve high accuracy since they can take into account interactions between the query and document tokens in \textit{every} Transformer layer.
However, they can also be expensive at inference time: we need to compute \eqref{eqn:ce} for all retrieved documents, each of which involves an expensive Transformer inference (see \Cref{sec:latency} for concrete evaluations).

By contrast, in \textit{dual-encoder} (DE) models~\citep{Karpukhin:2020}, we apply separate Transformers 
$T_1, T_2$ to the query and document, and then compute
\begin{equation}
    \label{eqn:de_dp}
    s( q, d ) = {\sf pool}( T_1( q ) )^\top {\sf pool}( T_2( d ) ).
\end{equation}
In practice, DE is usually less accurate than CE for re-ranking~\citep{Hofstatter:2020}, since the only interaction between the query and document is the final dot product.
Using stronger $T_1$ and $T_2$ can increase the accuracy of DE \citep{t5_t3,ma2023fine}, but it is also more expensive.
On the other hand, since all document embeddings ${\sf pool}(T_2(d))$ can be pre-computed offline, DE has much lower latency than CE with the same embedding backbone.

Another idea is to apply an MLP to the concatenation of ${\sf pool}( T_1( q ) )$ and ${\sf pool}( T_2( d ) )$ \citep{he2017neural}.
However,~\citet{rendle2020neural} claim that it may not be better than dot-product DE, partly because it is non-trivial to learn the dot-product operation with an MLP given the concatenated query-document embedding as the input.

\subsection{Late-interaction scorers}

Recently, there has been interest in \textit{late-interaction} models.
Similarly to DE models, such models also embed queries and documents separately into $T_1(q)$ and $T_2(d)$; however, they do not use pooling operations, but instead calculate dot products between \emph{all pairs} of query and document token embeddings, and perform a non-linear score reduction.
Formally, let us define query and document Transformer embeddings $\mathbf{Q} = ( \mathbf{q}_1, \ldots, \mathbf{q}_{L_1} ) := T_1( q ) \in \Real^{P \times L_1}$ and $\mathbf{D} = ( \mathbf{d}_1, \ldots, \mathbf{d}_{L_2} ) := T_2( d ) \in \Real^{P \times L_2}$, and let $\mathbf{S}:=\mathbf{Q}^\top\mathbf{D}$ denote the similarity matrix.
\textit{ColBERT}~\citep{Khattab:2020} then performs a non-linear sum-max reduction of $\mathbf{S}$:
$$ s( q, d ) = \sum\nolimits_{i \in [L_1]} \max\nolimits_{j \in [L_2]} \mathbf{q}_i^\top \mathbf{d}_j. $$
This non-linearity allows ColBERT to achieve better accuracy than DE.
See~\citep{Yi:2021} for a related model.
Another similar approach is CEDR~\citep{MacAvaney:2019}, which uses multiple query-document similarity matrices (one for each layer) from pre-trained Transformers.
For each query token, instead of only using the most aligned document token, \citet{qian2022multi} suggest considering the top-$k$ aligned document tokens.

Instead of using similarities between all pairs of query and document token embeddings, COIL \citep{gao2021coil} only considers pairs of query and document tokens that have the same token ID, while CITADEL \citep{li2022citadel} further implements a dynamic lexical routing.
\citet{li2023slim} use sparse token representations that can achieve competitive accuracy compared to ColBERT while being much faster.
\citet{mysore2021multi} suggest using co-citations as supervision for training.

Late-interaction models have precedent in the classical IR literature.
For example, DRMM~\citep{Guo:2016} scores (query, document) relevance using a feedforward network on top of \textit{count histogram} features.
On top of the query-document token similarity matrix based on Word2Vec, MatchPyramid~\citep{match_pyramid} applies a convolutional network, while KNRM~\citep{Xiong:2017} performs kernel-based pooling.
ConvKNRM~\citep{Dai:2018ConvKNRM} further uses a convolutional network on top of learned token embeddings to produce contextual embeddings.
There are also relevant models from the collaborative filtering literature, such as~\citet{Dziugaite:2015}.

\subsection{Limitations of existing late-interaction scorers}
\label{sec:limitations}

Late-interaction scorers such as ColBERT may be used in both the retrieval and re-ranking phases.
In this paper, we focus on the latter, which has been considered in several previous works, e.g.,~\citep{Hofstatter:2020,Santhanam:2022,Ren:2021}.
While ColBERT can yield a more favourable latency versus quality trade-off compared to DE and CE models, there are two important limitations for its use in re-ranking.

\emph{Limited expressivity of hand-crafted reductions}.
Although prior late-interaction models include more non-linearity compared with DE, they rely on hand-crafted score reductions, such as sum-max in ColBERT.
It is unclear if these operations can capture arbitrary complex interactions among query and document tokens that define the true relevance. 

\emph{Latency and storage overhead}.
Compared with CE, both DE and late-interaction models reduce latency by relying on pre-computed document (token) embeddings.
For DE, this requires storing a single document embedding vector (after proper pooling, cf. \eqref{eqn:de_dp}), and during online inference, we need to take one dot product.
Unfortunately, for late-interaction models, the latency and storage cost can be much higher: suppose we use $L_1$ query embedding vectors and $L_2$ document embedding vectors to calculate the similarity matrix, then the storage cost is $L_2$ times larger than that of DE models\footnote{The document (token) index can be stored on disk, or in RAM.
Storing in RAM significantly reduces latency, as we do not need to pay the cost of transferring embeddings from disk.
Even if one were to store the index on disk, 
it is still of interest to reduce the total embedding size to reduce the storage and transfer cost/latency (which would scale linearly with embedding size). }, and we need to take $L_1L_2$ dot products to obtain the similarity matrix.
It is unclear how various ways to reduce the latency and storage cost affect the model performance.

In the next section, we present \emph{LITE}, a novel late-interaction scorer that addresses both aforementioned shortcomings: 
(1) LITE can provably approximate a broad class of ground truth scoring functions (cf. \Cref{fact:lite_univ_approx}); and 
(2) it is more accurate than prior late-interaction methods on both in-domain and zero-shot tasks, and is amenable to latency and storage reduction with graceful degradation in model performance (cf. \Cref{sec:experiments}).

\section{LITE scorers}
\label{sec:lite}

We now introduce LITE scorers.
Let $\mathbf{S}:= \mathbf{Q}^\top \mathbf{D} \in \mathbb{R}^{L_1 \times L_2}$ denote the similarity matrix which consists of the dot products of all query-document Transformer token embedding pairs. 
LITE models apply MLPs to reduce $\mathbf{S}$ to a scalar score.
A natural option is to flatten $\mathbf{S}$ and then apply an MLP; we call this \emph{flattened LITE}. 
On the other hand, in this paper we focus on another MLP model which we call separable LITE, motivated by separable convolution \citep{xception} and MLP-Mixer \citep{mlp_mixer}: we first apply row-wise updates to $\mathbf{S}$, then column-wise updates, and then a linear projection to get a scalar score.
Formally, we first calculate $\mathbf{S}',\mathbf{S}^{''} \in \mathbb{R}^{L_1 \times L_2}$ as follows: 
for all $1\le i\le L_1$ and $1\le j\le L_2$, let
\begin{align}
    \label{eq:sep_lite_row}
    \mathbf{S}'_{i,:} &= {\sf LN}(\sigma(\mathbf{W}_2 {\sf LN}(\sigma(\mathbf{W}_1 \mathbf{S}_{i,:} + \mathbf{b}_1))+\mathbf{b}_2)), \\
    \label{eq:sep_lite_col}
    \mathbf{S}''_{:,j} &= {\sf LN}(\sigma(\mathbf{W}_4 {\sf LN}(\sigma(\mathbf{W}_3 \mathbf{S}'_{:,j} + \mathbf{b}_3))+\mathbf{b}_4)),
\end{align}
where ${\sf LN}$, $\sigma$ respectively denote layer-norm and ReLU.
The final score is given by $\mathbf{w}^\top {\sf vec} (\mathbf{S}'')$.

Given the above definitions, it is natural to consider the expressivity of LITE.
In particular, there are two fundamental questions:
(1) Can we always approximate (continuous) scoring functions using LITE, even though LITE only has the similarity matrix as inputs and the original Transformer embeddings are lost?
(2) Are LITE models more expressive than simpler models such as DE?

We answer these questions in the following: we show that LITE models are universal approximators of continuous scoring functions (cf. \Cref{fact:lite_univ_approx}), while there exists a scoring function which cannot be approximated by a simple dot-product DE (cf. \Cref{fact:de_neg}).

\subsection{Universal approximation with LITE}
\label{sec:lite_univ_approx}

We consider the Transformer architecture described by \citep{Yun:2020}: it includes multiple encoding layers, each of them can be parameterized as $ {\sf A}( \mathbf{X} ) + {\sf FF}( {\sf A}( \mathbf{X} ) ), $
where $\mathbf{X}\in\R^{P\times L}$ denotes the input, ${\sf FF}$ denotes a feedforward network, and ${\sf A}(\mathbf{X})$ denotes an \textit{attention} block:
$$\mathbf{X} + \sum_{i=1}^H\mathbf{W}^{i}_{\rm o} \mathbf{W}^{i}_{\rm v} \mathbf{X} {\sf Softmax}( (\mathbf{W}^{i}_{\rm k} \mathbf{X})^\top (\mathbf{W}^{i}_{\rm q} \mathbf{X}) ). $$
Here $\mathbf{W}^i_{\rm q}, \mathbf{W}^i_{\rm k}, \mathbf{W}^i_{\rm v}\in\R^{C\times P}$ are \textit{query}, \textit{key} and \textit{value} and projection matrices, $\mathbf{W}^i_{\rm o}\in\R^{P\times C}$ are output projection matrices, and $H,C$ denotes the number of heads and dimension of each head.
The ${\sf Softmax}$ function is applied to each input column. 

A Transformer network defined in the above way is \textit{permutation-equivariant} \citep[Claim 1]{Yun:2020}: if we permute the input token sequence, then the output token sequence is permuted in the same way.
If we want the network to distinguish between different orders of tokens, we can add a positional encoding matrix $\mathbf{E}\in\R^{P\times L}$ to the input $\mathbf{X}$, and apply a Transformer network to $\mathbf{X}+\mathbf{E}$.
    
As discussed in previous sections, in the late-interaction setting, we may need to store the whole Transformer output with shape $P\times L$, which can be expensive.
One solution is to apply a pooling function to reduce the number of tokens; we empirically study this method in \Cref{sec:limited_storage}, and in \Cref{fact:lite_univ_approx}, we apply pooling functions to map the Transformer output in $\R^{P\times L}$ to $\R^{P\times2}$, i.e., a sequence of two token embeddings.
We show that two query tokens and two document tokens are enough for universal approximation.

Next, we define the scorers.
Let $\mathcal{F}_{\sigma, n}$ denote the set of 2-layer ReLU networks with $n$-dimensional inputs and a scalar output:
\begin{align*}
    \mathcal{F}_{\sigma,n}:=\left\{\mathbf{z}\to\mathbf{a}^\top\sigma(\mathbf{W}\mathbf{z}+\mathbf{b})\right\},
\end{align*}
where $\sigma$ denotes the ReLU activation, $\mathbf{z}\in\R^n$, $\mathbf{W}\in\R^{m\times n}$, $\mathbf{a},\mathbf{b}\in\R^m$, and we allow $m$ to be arbitrarily large.
We first consider a class of flattened LITE scorers, including all two-layer ReLU networks on top of $\mathbf{S}$ that output a scalar score:
\begin{align*}
    \mathcal{F}_{\rm f}:=\left\{\mathbf{S}\to f({\sf vec}(\mathbf{S}))\middle|f\in\mathcal{F}_{\sigma,L_1\cdot L_2}\right\}.
\end{align*}
For separable LITE, we consider a simplified version of \eqref{eq:sep_lite_row} and \eqref{eq:sep_lite_col}, but without loss of generality, as described below: we first use a 2-layer ReLU network $f_1:\R^{L_2}\to\R$ to reduce every row of $\mathbf{S}$ to a single scalar, and thus transform $\mathbf{S}$ into a column vector; and then we apply another 2-layer ReLU network $f_2$ to reduce this column vector into a scalar.
Formally, 
\begin{align*}
    \mathcal{F}_{\rm s}:=\left\{\mathbf{S}\to f_2(f_1(\mathbf{S})) \middle|f_1\in\mathcal{F}_{\sigma,L_2},f_2\in\mathcal{F}_{\sigma,L_1}\right\},
\end{align*}
where we let $f_1(\mathbf{S})\in\R^{L_1}$ denote the result of applying $f_1$ to every row of $\mathbf{S}$.
Note that $\mathcal{F}_{\rm s}$ is a subset of the function class defined by \eqref{eq:sep_lite_row} and \eqref{eq:sep_lite_col} (ignoring layer normalization).

Here is our universal approximation result.
\begin{theorem}[Universal approximation with LITE]\label{fact:lite_univ_approx}
    Let $s:\R^{(P\times L_1)\times(P\times L_2)}\to\R$ denote a continuous scoring function with a compact support $\Omega$ and $L_1,L_2\ge2$.
    For any $\mathcal{F}\in\{\mathcal{F}_{\rm f},\mathcal{F}_{\rm s}\}$ and any $\epsilon>0$, there exist a scorer $f\in\mathcal{F}$, and $T_1:\R^{P\times L_1}\to\R^{P\times2}$ and $T_2:\R^{P\times L_2}\to\R^{P\times2}$, both of which consist of positional encodings, a Transformer and a pooling function, such that 
    \begin{align*}
        \int_\Omega(f(T_1(\mathbf{X})^\top T_2(\mathbf{Y}))-s(\mathbf{X},\mathbf{Y}))^2{\rm d}(\mathbf{X},\mathbf{Y})\le\epsilon.
    \end{align*}
\end{theorem}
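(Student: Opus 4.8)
The plan is to exploit that only $L^2$ (not uniform) approximation is required: this lets the two Transformers act as a lossy encoder that compresses each of $\mathbf{X},\mathbf{Y}$ into a single scalar ``cube index'', which the scorer $f$ then decodes through a lookup table, so that the $2\times2$ bottleneck of the pooled similarity matrix — note $\mathbf{S}=T_1(\mathbf{X})^\top T_2(\mathbf{Y})\in\R^{2\times2}$, so the scorer classes are read with $L_1=L_2=2$ — is not fatal. Let $\Omega_X,\Omega_Y$ be the compact projections of $\Omega$, enclosed in axis-aligned boxes $B_X\subseteq\R^{P\times L_1}$ and $B_Y\subseteq\R^{P\times L_2}$. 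Fix a mesh $\eta>0$, partition $B_X$ into cubes $\{A^X_i\}_{i\le N_X}$ of side $\eta$ with centers $c^X_i$ and $B_Y$ into $\{A^Y_j\}_{j\le N_Y}$ likewise, and set $e_X(\mathbf{X})=i$ on $A^X_i$, $e_Y(\mathbf{Y})=j$ on $A^Y_j$, and $\bar{s}(\mathbf{X},\mathbf{Y})=s\bigl(c^X_{e_X(\mathbf{X})},c^Y_{e_Y(\mathbf{Y})}\bigr)$. Since $s$ is continuous with compact support it is uniformly continuous, so $\sup_\Omega|\bar{s}-s|\to0$ as $\eta\to0$, and hence $\eta$ can be chosen to make $\int_\Omega(\bar{s}-s)^2$ arbitrarily small.

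Next I build $T_1,T_2$ so that $\mathbf{S}$ exposes the pair $(e_X(\mathbf{X}),e_Y(\mathbf{Y}))$. Assuming $P\ge2$, let $\mathbf{u}_1,\mathbf{u}_2\in\R^P$ be the first two standard basis vectors and consider the target maps $\Phi_1:\mathbf{X}\mapsto[\,e_X(\mathbf{X})\,\mathbf{u}_1,\ \mathbf{u}_2\,]$ and $\Phi_2:\mathbf{Y}\mapsto[\,\mathbf{u}_1,\ e_Y(\mathbf{Y})\,\mathbf{u}_2\,]$, for which $\Phi_1(\mathbf{X})^\top\Phi_2(\mathbf{Y})=\diag\bigl(e_X(\mathbf{X}),e_Y(\mathbf{Y})\bigr)$. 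These maps are piecewise constant and bounded; by density of continuous functions in $L^2$ they are $L^2$-approximable by continuous maps on the compact boxes $B_X,B_Y$, which by the Transformer universal approximation theorem \citep{Yun:2020} (with positional encodings, which are part of $T_1,T_2$) are $L^2$-approximable by a positional-encoding plus Transformer composition; composing with a $1$-Lipschitz pooling — e.g.\ selecting the first two output tokens — yields admissible $T_1\approx\Phi_1$ and $T_2\approx\Phi_2$ in $L^2$. Since $\Phi_1,\Phi_2$ are bounded, this gives $\mathbf{S}=T_1(\mathbf{X})^\top T_2(\mathbf{Y})\approx\diag\bigl(e_X(\mathbf{X}),e_Y(\mathbf{Y})\bigr)$ in $L^2(\Omega)$.

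It remains to choose a scorer $f\in\mathcal{F}$ that decodes $\diag(\alpha,\beta)$ into $\bar{s}$; crucially, the same $T_1,T_2$ work for both families. Pick a bounded continuous $g:\R^2\to\R$ with $g(i,j)=s(c^X_i,c^Y_j)$, e.g.\ clamped bilinear interpolation on the integer grid $[1,N_X]\times[1,N_Y]$ (whose spacing is $1$ regardless of $\eta$, keeping $g$ Lipschitz with constant $O(\|s\|_\infty)$); then $g(e_X(\mathbf{X}),e_Y(\mathbf{Y}))=\bar{s}(\mathbf{X},\mathbf{Y})$. For $\mathcal{F}_{\rm f}$, a network in $\mathcal{F}_{\sigma,4}$ can linearly read off $S_{11},S_{22}$ from ${\sf vec}(\mathbf{S})$ and then apply a two-layer ReLU approximation of $g$ on a compact box (classical universal approximation of continuous functions). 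For $\mathcal{F}_{\rm s}$, take the shared row map $f_1\in\mathcal{F}_{\sigma,2}$ to be $f_1(a,b)=\sigma(a)-\sigma(-a)+\sigma(b)-\sigma(-b)=a+b$, so that applying $f_1$ to the two rows $(\alpha,0)$ and $(0,\beta)$ of $\diag(\alpha,\beta)$ returns $(\alpha,\beta)$, and let $f_2\in\mathcal{F}_{\sigma,2}$ be a two-layer ReLU approximation of $g$; then $f=f_2\circ f_1$. In either case $f$ has finite weights, hence a finite Lipschitz constant $L_f$, and $f\bigl(\diag(e_X(\mathbf{X}),e_Y(\mathbf{Y}))\bigr)=\bar{s}(\mathbf{X},\mathbf{Y})$ up to the $g$-approximation error. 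Chaining the three error sources via $(a+b+c)^2\le3(a^2+b^2+c^2)$: first fix $\eta$ to control $\int_\Omega(\bar{s}-s)^2$, which fixes the partition and $g$; then fix the two-layer ReLU approximation of $g$, which fixes $f$ and $L_f$; then fix the Transformer and pooling accuracy small enough relative to $L_f$ and the (bounded) norms of $\Phi_1,\Phi_2$ that $\int_\Omega\bigl(f(\mathbf{S})-\bar{s}\bigr)^2$ is small. This yields $\int_\Omega\bigl(f(T_1(\mathbf{X})^\top T_2(\mathbf{Y}))-s(\mathbf{X},\mathbf{Y})\bigr)^2\le\epsilon$.

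I expect the main obstacle to be the first conceptual step: recognizing that $L^2$ approximation permits encoding a high-dimensional input into a single scalar that is accurate almost everywhere, so the $2\times2$ similarity bottleneck is harmless, and then designing $T_1,T_2$ so that $\mathbf{S}$ presents the two scalar codes in a form — here $\diag(e_X,e_Y)$ — that is simultaneously decodable by the flattened scorer and, more delicately, by the separable scorer, whose first-stage map is forced to be shared across the rows (the choice $f_1(a,b)=a+b$ is what makes this work). The remaining ingredients — invoking \citep{Yun:2020}, mollifying the step encodings to continuous functions, and bookkeeping the nested error terms — are routine, and the $P=1$ case (if needed) would require a separate, more restrictive argument since a rank-one similarity matrix cannot separately expose $e_X$ and $e_Y$.
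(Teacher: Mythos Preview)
Your argument is correct (modulo the $P=1$ caveat, on which more below) and takes a genuinely different route from the paper. The paper does not invoke the Yun et al.\ universal approximation theorem as a black box; instead it opens up that construction and uses two internal lemmas --- a quantization lemma and the \emph{contextual mapping} lemma --- to build the Transformers explicitly. In the paper's construction the two pooled tokens on the query side are a fixed vector $\mathbf{v}$ and the first column $g_{\mathrm c}(\mathbf{X})_{:,1}$ of the contextual-mapping output (and symmetrically on the document side); the resulting $2\times2$ similarity matrix carries the scalar encodings of $\mathbf{X}$ and $\mathbf{Y}$ on the \emph{off}-diagonal. You instead target $\diag(e_X,e_Y)$ directly and reach it in $L^2$ by mollifying the piecewise-constant $\Phi_1,\Phi_2$ and applying the black-box theorem. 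Your approach is more modular and makes the separable-LITE case especially clean: because the off-diagonal of your target is zero, the shared row map $f_1(a,b)=a+b$ immediately recovers $(e_X,e_Y)$, whereas the paper's similarity matrix has nontrivial off-diagonal entries and its separable-case argument (worked out in full only in the non--positional-encoding setting) instead constructs $f_1$ as an injection on the finite set of rows that can occur. The paper's approach, in exchange, is more constructive (the Transformer is assembled layer by layer from quantization and contextual-mapping blocks) and avoids the extra mollification step.

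One small correction: your closing remark that for $P=1$ a rank-one $2\times2$ similarity matrix ``cannot separately expose $e_X$ and $e_Y$'' is not right. Take $T_1(\mathbf{X})=(e_X(\mathbf{X}),1)$ and $T_2(\mathbf{Y})=(1,e_Y(\mathbf{Y}))$ as $1\times2$ matrices; then $\mathbf{S}=\bigl(\begin{smallmatrix} e_X & e_X e_Y \\ 1 & e_Y \end{smallmatrix}\bigr)$ is rank one yet has $e_X,e_Y$ on the diagonal, so the flattened case goes through unchanged. The separable case is more delicate here since $f_1(a,b)=a+b$ no longer isolates the codes, but because only finitely many row vectors occur one can pick $f_1$ injective on that finite set and let $f_2$ decode --- exactly the style of argument the paper uses. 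So the $P=1$ gap is minor and closable without new ideas.
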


The proof 
is given in \Cref{sec:lite_univ_approx_proof}, and is based on the ``contextual mapping'' techniques from \citep{Yun:2020}. 
This result is non-trivial, since the input to LITE scorers is the similarity matrix based on only two query tokens and two document tokens; this means LITE models are universal approximators even under strong constraints on the total embedding size.
In contrast, as we show in \Cref{fact:de_neg}, if the total embedding size is less than $P \cdot L$, then a dot-product DE can have a large approximation error.

\subsection{Non-universality of existing scorers}

In addition to \Cref{fact:lite_univ_approx}, even without positional encodings, in \Cref{fact:lite_univ_approx_full} we show that LITE scorers are still universal approximators of arbitrary continuous scoring functions if we do not apply pooling.
By contrast, without positional encodings, ColBERT can only represent permutation-equivariant ground-truth scoring functions, because the summation and maximum operations do not consider the order of input tokens.
It is an open question if ColBERT is a universal approximator with positional encodings.

If we ask whether a dot-product DE can approximate arbitrary continuous functions, then we give a negative result.

\begin{theorem}[Limitation of DE with restricted embedding dimension]
\label{fact:de_neg}
Suppose each query and document both have $L\ge2$ tokens. 
There exists a continuous ground-truth scoring function $s$ supported on $\Omega:=[0,1]^{P\times L}\times[0,1]^{P\times L}$, such that if $O\le P \cdot L-1$, then for any mappings $h_1,h_2:\R^{P\times L}\to\R^O$ that map queries and documents to $O$-dimensional vectors respectively,
\begin{equation*}
    \int_\Omega(h_1(\mathbf{X})^\top h_2(\mathbf{Y})-s(\mathbf{X},\mathbf{Y}))^2{\rm d}(\mathbf{X},\mathbf{Y})\ge\frac{1}{20}.
\end{equation*}
\end{theorem}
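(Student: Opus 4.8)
The plan is to exploit a dimension/rank obstruction: the bilinear form $h_1(\mathbf{X})^\top h_2(\mathbf{Y})$, viewed as a function on $\Omega$, has "rank at most $O$" in a suitable sense, while we will design $s$ to behave like a rank-$(P\cdot L)$ object on a cleverly chosen finite set of inputs. Concretely, I would restrict attention to a finite family of queries $\mathbf{X}^{(1)},\dots,\mathbf{X}^{(P\cdot L)}$ and documents $\mathbf{Y}^{(1)},\dots,\mathbf{Y}^{(P\cdot L)}$ chosen so that the target matrix $M_{ab} := s(\mathbf{X}^{(a)},\mathbf{Y}^{(b)})$ is (a scaled copy of) the identity matrix $I_{P\cdot L}$. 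Since $\mathbf{X}$ ranges over $[0,1]^{P\times L}$, which has exactly $P\cdot L$ real coordinates, there is room to pick these representatives as, say, standard-basis-like points, and to build a continuous bump-function $s$ supported near the diagonal pairs $(\mathbf{X}^{(a)},\mathbf{Y}^{(a)})$ that equals $1$ there and $0$ at off-diagonal pairs $(\mathbf{X}^{(a)},\mathbf{Y}^{(b)})$, $a\neq b$. Then for any $h_1,h_2$, the matrix $N_{ab} := h_1(\mathbf{X}^{(a)})^\top h_2(\mathbf{Y}^{(b)})$ factors as $N = A^\top B$ with $A,B\in\R^{O\times (P\cdot L)}$, so $\mathrm{rank}(N)\le O < P\cdot L = \mathrm{rank}(I_{P\cdot L})$.

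The key quantitative step is to turn this rank deficiency into an $L_2$ lower bound on $\Omega$, not just a statement about the finite grid. First, a linear-algebra fact: if $N$ is any $n\times n$ matrix with $\mathrm{rank}(N)\le n-1$, then $\|N - I_n\|_F^2 \ge 1$ — indeed, $N-I$ having rank $\le n-1$ forces some unit vector $u$ in $\ker(N)$ with $\|(N-I)u\|^2 = \|u\|^2 = 1$, wait, more carefully: $N u = 0$ for some unit $u$, so $(N-I)u = -u$ and $\|N-I\|_F^2 \ge \|(N-I)u\|^2 = 1$. Hence $\sum_{a,b}(N_{ab}-M_{ab})^2 \ge 1$, so for at least one pair $(a,b)$ we have $(h_1(\mathbf{X}^{(a)})^\top h_2(\mathbf{Y}^{(b)}) - s(\mathbf{X}^{(a)},\mathbf{Y}^{(b)}))^2 \ge 1/(P\cdot L)^2$. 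By continuity of $h_1,h_2$ (hence of the integrand) — or, if $h_1,h_2$ are not assumed continuous, by instead summing the pointwise-squared error over the grid and integrating a bump around each grid point — one transfers this to a region of positive measure. I would actually set it up the second way: place disjoint small cubes $U_a\times V_b$ around each $(\mathbf{X}^{(a)},\mathbf{Y}^{(b)})$, define $s$ so that on $U_a\times V_b$ it is identically $\llbracket a=b\rrbracket$ (smoothly ramped to $0$ outside $\bigcup U_a\times V_b$ to stay continuous on all of $\Omega=[0,1]^{P\times L}\times[0,1]^{P\times L}$), and integrate the Frobenius bound cube-by-cube. Choosing the cube side-lengths so that each cube has volume $v$, the total error is $\ge v\cdot\sum_{a,b}(\text{discrete errors on that cube})$; but since $h_1,h_2$ need not be constant on the cubes, I instead pick one representative point per cube and use that the integral over $U_a\times V_b$ of $(h_1^\top h_2 - s)^2$ dominates... — this is the only delicate part.

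The main obstacle, then, is precisely this discrete-to-continuous transfer when $h_1,h_2$ are arbitrary (possibly wild) functions: the rank bound only controls the bilinear form at finitely many points, whereas the conclusion is an integral. I expect the clean fix is to average: for each fixed document cube $V_b$, the function $\mathbf{X}\mapsto \int_{V_b} h_1(\mathbf{X})^\top h_2(\mathbf{Y})\,d\mathbf{Y} = h_1(\mathbf{X})^\top \bar h_2(b)$ is still a rank-$\le O$ bilinear form in $(\mathbf{X}, b)$ after also averaging over $\mathbf{X}\in U_a$; applying Jensen's inequality, $\int_{U_a\times V_b}(h_1^\top h_2 - \llbracket a=b\rrbracket)^2 \ge v^2 (\bar N_{ab} - \llbracket a=b\rrbracket)^2$ where $\bar N_{ab}:=\frac1{v^2}\int_{U_a\times V_b} h_1(\mathbf{X})^\top h_2(\mathbf{Y})$, and $\bar N = \bar A^\top \bar B$ still has rank $\le O$. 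Now the Frobenius bound applies to $\bar N$, giving $\sum_{a,b}\int_{U_a\times V_b}(\cdots)^2 \ge v^2\cdot 1 = v^2$. Finally, choosing $v = 1/(P\cdot L)$ — feasible since there are $(P\cdot L)^2$ disjoint cubes to fit inside the unit cube of $\R^{(P\cdot L)^2}$-worth of $(\mathbf{X},\mathbf{Y})$-space, and one can even take $v$ close to $\big((P\cdot L)^2\big)^{-1}$ times the whole volume — yields a universal constant lower bound; tightening the cube geometry and the smooth-ramp region so the leftover contributions are negligible gives the stated $1/20$. I would present the construction with explicit modest constants and verify $1/20$ at the end.
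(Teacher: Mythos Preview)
Your high-level strategy---a rank obstruction on a finite grid, followed by a discrete-to-continuous transfer---matches the paper's, and the Jensen/averaging device you propose for the transfer is valid (the paper instead applies the discrete bound \emph{pointwise} over translates and then integrates, but either works). The gap is in your discrete construction: it is too weak to produce a universal constant.

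Concretely, with $n := P\cdot L$ query points and $n$ document points and target matrix $I_n$, Eckart--Young gives only $\|N - I_n\|_F^2 \ge 1$ for any rank-$(n{-}1)$ matrix $N$. After Jensen your integral bound is $v^2 \cdot \|\bar N - I_n\|_F^2 \ge v^2$, where $v = |U_a| = |V_b|$. Since the $n$ sets $U_a$ are disjoint in $[0,1]^{n}$, necessarily $v \le 1/n$, so your bound is at most $1/n^2 = 1/(PL)^2$, which tends to $0$. The claim that taking $v = 1/(PL)$ ``yields a universal constant lower bound'' is therefore incorrect: what you actually need is that the \emph{per-entry} Frobenius error on the grid be bounded below by a constant, and $I_n$ gives per-entry error only $1/n^2$.

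The paper fixes this by choosing a much richer grid: all $2^{PL}$ binary points as both queries and documents, with target $K^*(\mathbf X,\mathbf Y)=\mathrm{tr}(\mathbf X^\top\mathbf Y)$. A short spectral computation shows the resulting $2^{PL}\times 2^{PL}$ target matrix has its $(PL)$-th singular value equal to $2^{PL-2}$, so any rank-$(PL{-}1)$ approximation incurs Frobenius error at least $(2^{PL-2})^2=2^{2PL}/16$---a constant fraction $1/16$ of the $2^{2PL}$ entries. Plugging this into your own Jensen transfer (now with $2^{PL}$ cubes of volume $v=2^{-PL}$) gives $v^2\cdot 2^{2PL}/16 = 1/16$, and a smoothing step yields the stated $1/20$. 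Alternatively, your scheme can be salvaged by scaling the target to $c\,I_n$ with $c$ of order $n$, so that $v^2 c^2$ is constant; but you did not propose this, and the paper's choice of $K^*$ is the cleaner route.
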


Previously \citet{Menon:2022DE} showed that if there is no constraint on the embedding dimension, then dot-product DE is a universal approximator of continuous functions.
By contrast, here we show if the DE embedding dimension is less than $P \cdot L$, there could be a constant approximation error.

\section{Experiments}
\label{sec:experiments}

We now evaluate the proposed LITE scorer on a few standard information retrieval benchmarks, where we confirm that LITE significantly improves accuracy over existing DE and late-interaction methods on both in-domain and out-of-domain tasks.
Moreover, we show that LITE remains competitive as we reduce the latency and storage cost, and in particular, LITE can achieve higher accuracy than ColBERT with less latency and $0.25\times$ storage cost.

\subsection{Experimental setup}

\paragraph{Datasets.}

We evaluate scorers on both in-domain re-ranking on the MS~MARCO~\citep{Nguyen:2016} and Natural Questions~(NQ;~\citep{Kwiatkowski:2019}) datasets,
and zero-shot re-ranking on the BEIR~\citep{thakur2021beir} dataset.

\paragraph{Training.}

For training on MS MARCO, we use the official training set of triplets $(q,d_+,d_-)$, where document $d_+$ is relevant to query $q$ while $d_-$ is irrelevant. 
State-of-the-art methods on MS MARCO also use hard-negative mining~\citep{Qu:2021,Santhanam:2022}; however, in this paper our focus is on comparing different late-interaction scorers, and thus we simply use the original triplet training data.

We use labels from a CE teacher model during training, as it has been observed that distillation can significantly improve performance~\citep{Santhanam:2022,Menon:2022DE}.
For MS MARCO, we use the scores from the {\tt T2} teacher released by~\citet{Hofstatter:2020}.
For the NQ dataset, we use a teacher model trained with 19 hard-negatives mined with BM25, following~\citep{Menon:2022DE}.
For loss functions, we try the KL loss and the margin MSE loss (see \Cref{sec:loss_metric} for definitions of loss functions and more details of training).

\paragraph{Evaluation.}

For MS MARCO, we use the standard Dev set and the TREC DL 19 and 20 test sets \citep{trec_19,trec_20}.
For NQ, we utilize the version of this dataset used in \citep{Karpukhin:2020}, which consists of questions, positive passages containing the correct answer, and a collection of Wikipedia passages. 
Re-ranking metrics are reported on the Dev query set with 200 passages containing positives, 100 BM25 hard-negatives and up to 100 random negatives, following~\citep{Menon:2022DE}.
We report MRR@10 \citep{Radev:2002} and nDCG@10 \citep{jarvelin2002cumulated} scores.

For BEIR, following \citep{thakur2021beir}, we take the scorers trained on MS MARCO and evaluate zero-shot transfer performance.
Specifically, we report evaluation results on the 14 public datasets.
\citet{thakur2021beir} evaluate the CE model by first retrieving 100 documents using BM25, and then calculating the nDCG@10 score for CE re-ranking; we use the same procedure.

\paragraph{Models.}

For the Transformer encoder, we start from a pretrained BERT model \citep{Turc:2019} which has 6 layers and 768 token dimension.
For DE and late-interaction models, we let the query encoder and document encoder share weights.
We use a query sequence length of 30 and a document sequence length of 200 with the Transformer.
If we use all 200 document tokens to calculate the similarity matrix $\mathbf{S}$, then $\mathbf{S} \in \mathbb{R}^{30 \times 200}$.
In some experiments the document sequence length is reduced in the end to save latency and storage cost; we will specify the details later.
More hyperparameter details are given in Appendix~\ref{app:hyperparams}.

\subsection{In-domain re-ranking on MS MARCO and NQ}

In \Cref{tab:msmarco_mrr}, we report MRR@10 and nDCG@10 scores for different scorers on all datasets.
When calculating the similarity matrix for ColBERT and LITE, we use the original sequence length (200) and token embedding dimension (768) of the Transformer encoder.
We try both the KL loss and margin MSE loss and report the better results; more details can be found in \Cref{sec:exp_losses}.

\begin{table*}[t]
    \centering
    \caption{MRR@10 and nDCG@10 scores. Separable LITE achieves the best in-domain results across all benchmarks.}
    \label{tab:msmarco_mrr}
    \begin{tabular}{lcccccccc}
        \toprule
         & \multicolumn{2}{c}{\textbf{MS MARCO}} & \multicolumn{2}{c}{\textbf{DL 2019}} & \multicolumn{2}{c}{\textbf{DL 2020}} & \multicolumn{2}{c}{\textbf{NQ}} \\
        \textbf{Scorer} & \textbf{MRR} & \textbf{nDCG} & \textbf{MRR} & \textbf{nDCG} & \textbf{MRR} & \textbf{nDCG} & \textbf{MRR} & \textbf{nDCG} \\
        \midrule
        DE & 0.355 & 0.413 & 0.861 & 0.744 & 0.842 & 0.723 & 0.699 & 0.611 \\
        ColBERT & 0.383 & 0.442 & 0.878 & 0.753 & 0.860 & 0.731 & 0.756 & 0.689 \\
        Sep LITE & \best{0.393} & \best{0.452} & \best{0.898} & \best{0.765} & \best{0.873} & \best{0.756} & \best{0.769} & \best{0.693} \\
        \bottomrule
    \end{tabular}    
\end{table*}

On MS MARCO, the {\tt T2} teacher~\citep{Hofstatter:2020} has Dev MRR@10 of 0.399.
A DE student can only achieve MRR@10 of 0.355.
Both ColBERT and separable LITE can significantly reduce this gap, but separable LITE is much better than ColBERT (0.393 vs. 0.383).
We also train a 6-layer, 768-dimensional CE student using distillation from the {\tt T2} teacher; it has MRR@10 of 0.395, which is only slightly better than separable LITE.
Moreover, on TREC DL 19 and 20 datasets, separable LITE also achieves better MRR@10 and nDCG@10 scores than ColBERT.

These observations generalize to the NQ dataset as well: we find that late-interaction models are much better than DE, and separable LITE is much better than ColBERT.

We also try a few ablations, including using top-$k$ aligned document tokens instead of top-$1$ in ColBERT, and freezing the backbone and only fine-tuning the scorers.
Separable LITE achieves better accuracy than ColBERT in all cases. See \Cref{sec:ablations} for details.

\subsection{Zero-shot re-ranking on BEIR}
\label{sec:beir}

\Cref{tab:beir_ndcg} presents zero-shot transfer results with ColBERT and separable LITE (from \Cref{tab:msmarco_mrr}) on 14 public datasets from BEIR~\citep{thakur2021beir}.
We also include results for the 6-layer CE model mentioned above, which is trained in the same way as other late-interaction models.
We can see that separable LITE achieves better zero-shot transfer than ColBERT on 11 out of 14 datasets.
CE still gives better zero-shot transfer than separable LITE, but as we show below, CE has much higher latency (cf. \Cref{tab:latency}).

\begin{table}[!t]
    \centering
    \caption{BEIR nDCG@10. Separable LITE is better than ColBERT on 11 out of 14 datasets.}
    \label{tab:beir_ndcg}
    \begin{tabular}{@{}lcc|c}
        \toprule
        \textbf{Dataset} & \textbf{ColBERT} & \textbf{Sep LITE} & \textbf{CE} \\
        \midrule
        T-COVID & 0.761 & \best{0.763} & 0.771 \\
        NFCorpus & 0.356 & \best{0.358} & 0.361 \\
        NQ & 0.525 & \best{0.540} & 0.552 \\
        HotpotQA & \best{0.685} & 0.681 & 0.728 \\
        FiQA-2018 & 0.330 & \best{0.336} & 0.346 \\
        ArguAna & \best{0.433} & 0.424 & 0.519 \\
        Touch\'{e}-2020 & 0.274 & \best{0.305} & 0.300 \\
        CQAD & 0.363 & \best{0.374} & 0.378 \\
        Quora & 0.767 & \best{0.839} & 0.832 \\
        DBPedia & 0.410 & \best{0.434} & 0.438 \\
        SCIDOCS & 0.155 & \best{0.164} & 0.167 \\
        FEVER & 0.782 & \best{0.788} & 0.804 \\
        C-FEVER & 0.190 & \best{0.213} & 0.232 \\
        SciFact & \best{0.667} & 0.633 & 0.695 \\
        \bottomrule
    \end{tabular}    
\end{table}

\subsection{Results on MS MARCO with reduced latency and storage}
\label{sec:limited_storage}

As discussed previously, late-interaction methods may have higher latency and storage cost than DE. 
Suppose the Transformer encoders use $L_1$ query tokens and $L_2$ document tokens of dimension $P$, then DE only needs to take one dot product, while calculating the similarity matrix for late-interaction methods requires $L_1L_2$ dot products. 
Moreover, to save online latency, we need to pre-compute and store one $P$-dimensional document embedding vector for DE, while for late-interaction methods we might need to store a $P\times L_2$ embedding matrix.
This increase in storage cost is significant in industry-scale information retrieval systems, since there can be billions of documents \citep{zhang2013image,overwijk2022clueweb22}.

One solution is to reduce $P$ and $L_2$ to some smaller $P'$ and $L_2'$ (by projection, pooling, etc.), and then store a $P'\times L_2'$ embedding matrix for each document.
Correspondingly, for each query we use $L_1$ embedding vectors of dimension $P'$, and to calculate the similarity matrix, we need $L_1L_2'$ dot products between $P'$-dimensional vectors.
This can reduce both latency and storage; below we analyze how performance drops with such reduction, and show that separable LITE remains competitive compared to ColBERT.

\paragraph{Reducing the number of output document tokens.}

\begin{figure}[!t]
    \begin{center}
        \includegraphics[width=0.45\textwidth]{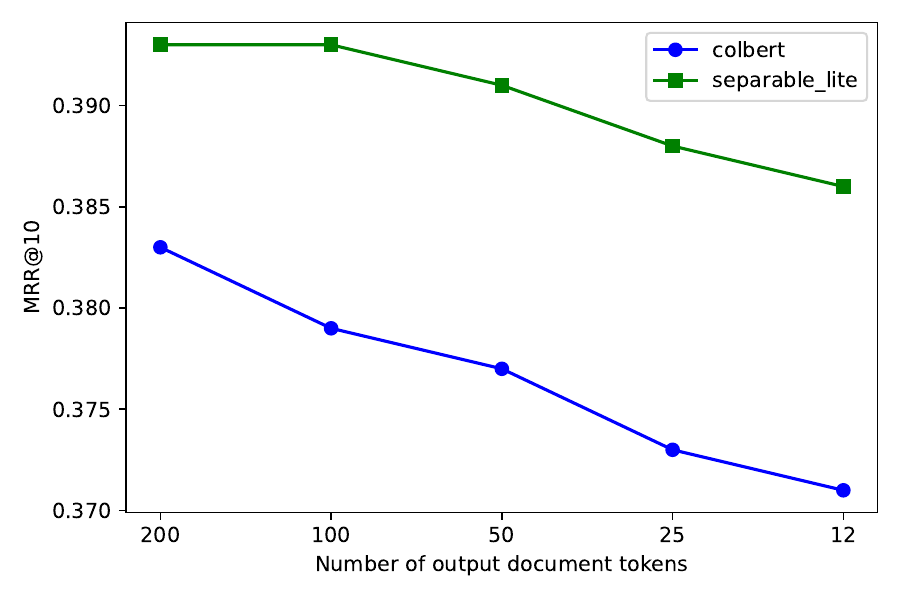}
        \caption{MS MARCO MRR with fewer document tokens.}    
        \label{fig:reduce_L2}
    \end{center}
\end{figure}

Here, we keep the token dimension at 768 and reduce the number of output document tokens.
The Transformer encoder outputs an embedding matrix $\mathbf{D}\in\R^{768\times200}$ of 200 token embeddings, and we try to reduce the number of tokens either by directly taking average of adjacent columns (average pooling), or by applying a trainable linear projection to every row of $\mathbf{D}$.
We try both methods and find that separable LITE prefers learnable projection while ColBERT prefers average pooling.
The results are shown in \Cref{fig:reduce_L2}, and we can see separable LITE is more accurate than ColBERT with reduced document sequence lengths.

\paragraph{Reducing token dimension.}

\begin{figure}[!t]
    \begin{center}
        \includegraphics[width=0.45\textwidth]{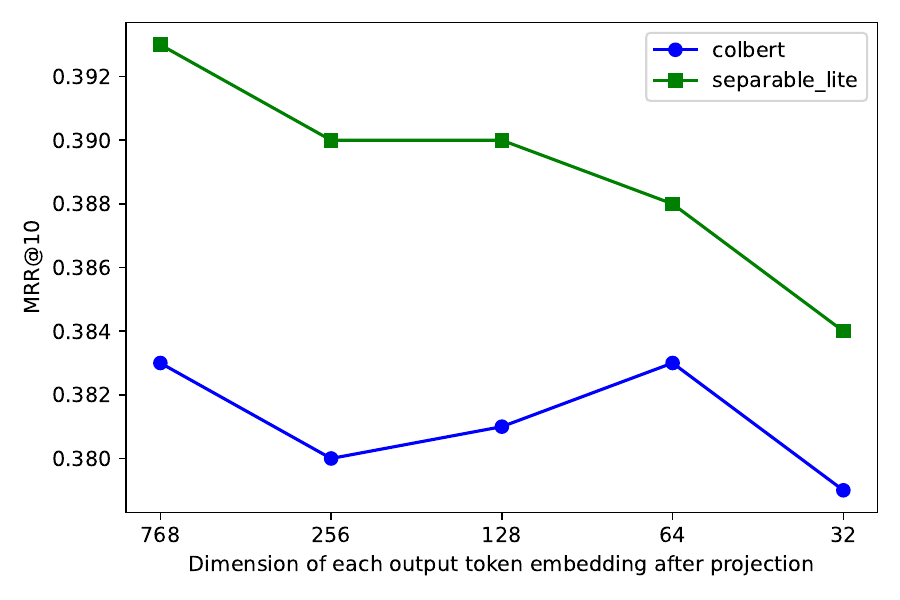}
        \caption{MS MARCO MRR with reduced token dimension.}
        \label{fig:reduce_token_dim}
    \end{center}
\end{figure}

Next, we fix the number of document tokens at 200, and reduce the dimension of each output token via learnable linear projections.
The results are given in \Cref{fig:reduce_token_dim}.
With different token dimension, separable LITE is always more accurate than ColBERT.

\paragraph{Achieving lower latency/storage than ColBERT using LITE.}
\label{sec:latency}

If the size of pre-computed document embedding matrix is fixed, then LITE has higher latency than ColBERT since its MLP scorer is slower than sum-max.
However, since LITE is robust to embedding size reduction, it can remain more accurate than ColBERT while being more time and space efficient by using fewer document tokens. 
The result is shown in \Cref{tab:latency}, together with latency of other scorers studied before.

In \Cref{tab:latency}, we evaluate the latency of scoring relevance between 1 query and 100 documents.
For CE, we use the 6-layer distilled student and evaluate the total time to calculate the joint embeddings between the query and every document.
For DE, ColBERT and separable LITE, we use models from \Cref{tab:msmarco_mrr}; we pre-compute the document embeddings, and evaluate the query embedding generation and scoring time.
For the ``small separable LITE'' model, we only store 50 tokens for each document, and we also use a small MLP (we let $\mathbf{W}_1$ in~\eqref{eq:sep_lite_row} have shape $(768,50)$).
In \Cref{tab:latency}, small separable LITE only uses $0.25\times$ storage space compared with ColBERT which stores 200 document token embeddings, and it also achieves lower latency while still being much more accurate than ColBERT (0.391 vs. 0.383).
In \Cref{tab:beir_ndcg_small_sep}, we show that small separate LITE is better than ColBERT on 8 out of 14 datasets.
We can also see that the CE latency is $100\times$ of the LITE latency, since CE cannot use offline pre-computation.

\begin{table}[!t]
    \centering
    \caption{Latency of different scorers.}
    \label{tab:latency}
    \vskip 0.1in
    \begin{tabular}{@{}lrrr}
        \toprule
        \textbf{Scorer} & Latency & Storage & MS MARCO \\
         & (in ms) & & MRR@10 \\
        \midrule
        CE (student) & 10990  & 0$\times$ & 0.395  \\
        DE & 42  & 1$\times$ & 0.355  \\
        ColBERT & 62 & 200$\times$ & 0.383 \\
        Separable LITE & 111 & 200$\times$ & 0.393  \\
        Small sep LITE & 56 & 50$\times$ & 0.391 \\
        \bottomrule
    \end{tabular}
\end{table}

\subsection{Comparison with KNRM}

KNRM~\citep{Xiong:2017} is one popular pre-Transformer scorer; it calculates the similarity matrix using Word2Vec embeddings, and then apply kernel pooling.
It has been applied to MS MARCO in a few recent works \citep{Khattab:2020,Hofstatter:2020}; however, KNRM only achieves low accuracy, likely because the underlying encoders are non-pretrained shallow Transformers.
In this work, we try to apply KNRM with the same pretrained BERT encoder as other scorers.
We find that KNRM can achieve similar accuracy to ColBERT overall, but separable LITE is still better than KNRM over all benchmarks; see \Cref{sec:knrm_results} for details.

\section{Conclusion}

In this work, we propose LITE models that can provably approximate any continuous scoring functions.
We also show that LITE outperforms prior late-interaction models in both in-domain and zero-shot reranking.
In particular, LITE can achieve higher accuracy with less latency and storage cost.

\section*{Limitations}

In our MS MARCO experiments, we only train our models using triplet data; by contrast, state-of-the-art models such as ColBERTv2 \citep{Santhanam:2022} use additional techniques such as hard-negative mining.
One next step is to evaluate LITE with these techniques.
Additionally, our proposed LITE model is suitable for the re-ranking phase of information retrieval.
However, given that it is built on top of a factorized dual-encoder,
can one also adapt it for use in the retrieval phase? For instance, one possibility could be to jointly train retrieval embeddings and the LITE model such that both the models share the same encoders.
Such an analysis is also important and needed in future work.

\section*{Ethics Statement}

LITE is a general technique that can improve relevance scoring accuracy compared with simple operations such as dot products, and we do not see potential risks. In particular, LITE is only a scoring module and does not generate harmful information. We do need to train the LITE scorer and fine tune the underlying Transformer encoder, which could have some environmental effect.

\bibliographystyle{plainnat}
\bibliography{references}

\begin{thebibliography}{51}
\providecommand{\natexlab}[1]{#1}
\providecommand{\url}[1]{\texttt{#1}}
\expandafter\ifx\csname urlstyle\endcsname\relax
  \providecommand{\doi}[1]{doi: #1}\else
  \providecommand{\doi}{doi: \begingroup \urlstyle{rm}\Url}\fi

\bibitem[Chollet(2017)]{xception}
Fran{\c{c}}ois Chollet.
\newblock Xception: Deep learning with depthwise separable convolutions.
\newblock In \emph{Proceedings of the IEEE conference on computer vision and
  pattern recognition}, pages 1251--1258, 2017.

\bibitem[Craswell et~al.(2020)Craswell, Mitra, Yilmaz, Campos, and
  Voorhees]{trec_19}
Nick Craswell, Bhaskar Mitra, Emine Yilmaz, Daniel Campos, and Ellen~M
  Voorhees.
\newblock Overview of the trec 2019 deep learning track.
\newblock \emph{arXiv preprint arXiv:2003.07820}, 2020.

\bibitem[Craswell et~al.(2021)Craswell, Mitra, Yilmaz, and Campos]{trec_20}
Nick Craswell, Bhaskar Mitra, Emine Yilmaz, and Daniel Campos.
\newblock Overview of the trec 2020 deep learning track.
\newblock \emph{arXiv preprint arXiv:2102.07662}, 2021.

\bibitem[Cybenko(1989)]{cybenko1989approximation}
George Cybenko.
\newblock Approximation by superpositions of a sigmoidal function.
\newblock \emph{Mathematics of control, signals and systems}, 2\penalty0
  (4):\penalty0 303--314, 1989.

\bibitem[Dai et~al.(2018)Dai, Xiong, Callan, and Liu]{Dai:2018ConvKNRM}
Zhuyun Dai, Chenyan Xiong, Jamie Callan, and Zhiyuan Liu.
\newblock Convolutional neural networks for soft-matching n-grams in ad-hoc
  search.
\newblock In \emph{Proceedings of the Eleventh ACM International Conference on
  Web Search and Data Mining}, WSDM '18, page 126–134, New York, NY, USA,
  2018. Association for Computing Machinery.
\newblock ISBN 9781450355810.
\newblock \doi{10.1145/3159652.3159659}.
\newblock URL \url{https://doi.org/10.1145/3159652.3159659}.

\bibitem[Devlin et~al.(2019)Devlin, Chang, Lee, and Toutanova]{Devlin:2019}
Jacob Devlin, Ming{-}Wei Chang, Kenton Lee, and Kristina Toutanova.
\newblock {BERT:} pre-training of deep bidirectional transformers for language
  understanding.
\newblock In Jill Burstein, Christy Doran, and Thamar Solorio, editors,
  \emph{Proceedings of the 2019 Conference of the North American Chapter of the
  Association for Computational Linguistics: Human Language Technologies,
  {NAACL-HLT} 2019, Minneapolis, MN, USA, June 2-7, 2019, Volume 1 (Long and
  Short Papers)}, pages 4171--4186. Association for Computational Linguistics,
  2019.

\bibitem[Dziugaite and Roy(2015)]{Dziugaite:2015}
Gintare~Karolina Dziugaite and Daniel~M. Roy.
\newblock Neural network matrix factorization.
\newblock \emph{CoRR}, abs/1511.06443, 2015.
\newblock URL \url{http://arxiv.org/abs/1511.06443}.

\bibitem[Funahashi(1989)]{funahashi1989approximate}
Ken-Ichi Funahashi.
\newblock On the approximate realization of continuous mappings by neural
  networks.
\newblock \emph{Neural networks}, 2\penalty0 (3):\penalty0 183--192, 1989.

\bibitem[Gao et~al.(2021)Gao, Dai, and Callan]{gao2021coil}
Luyu Gao, Zhuyun Dai, and Jamie Callan.
\newblock Coil: Revisit exact lexical match in information retrieval with
  contextualized inverted list.
\newblock \emph{arXiv preprint arXiv:2104.07186}, 2021.

\bibitem[Guo et~al.(2016{\natexlab{a}})Guo, Fan, Ai, and Croft]{Guo:2016}
Jiafeng Guo, Yixing Fan, Qingyao Ai, and W.~Bruce Croft.
\newblock A deep relevance matching model for ad-hoc retrieval.
\newblock In \emph{Proceedings of the 25th ACM International on Conference on
  Information and Knowledge Management}, CIKM '16, page 55–64, New York, NY,
  USA, 2016{\natexlab{a}}. Association for Computing Machinery.
\newblock ISBN 9781450340731.

\bibitem[Guo et~al.(2016{\natexlab{b}})Guo, Kumar, Choromanski, and
  Simcha]{guo2016quantization}
Ruiqi Guo, Sanjiv Kumar, Krzysztof Choromanski, and David Simcha.
\newblock Quantization based fast inner product search.
\newblock In \emph{Artificial intelligence and statistics}, pages 482--490.
  PMLR, 2016{\natexlab{b}}.

\bibitem[Guo et~al.(2020)Guo, Sun, Lindgren, Geng, Simcha, Chern, and
  Kumar]{Guo:2020}
Ruiqi Guo, Philip Sun, Erik Lindgren, Quan Geng, David Simcha, Felix Chern, and
  Sanjiv Kumar.
\newblock Accelerating large-scale inference with anisotropic vector
  quantization.
\newblock In \emph{Proceedings of the 37th International Conference on Machine
  Learning, {ICML} 2020, 13-18 July 2020, Virtual Event}, volume 119 of
  \emph{Proceedings of Machine Learning Research}, pages 3887--3896. {PMLR},
  2020.

\bibitem[He et~al.(2017)He, Liao, Zhang, Nie, Hu, and Chua]{he2017neural}
Xiangnan He, Lizi Liao, Hanwang Zhang, Liqiang Nie, Xia Hu, and Tat-Seng Chua.
\newblock Neural collaborative filtering.
\newblock In \emph{Proceedings of the 26th international conference on world
  wide web}, pages 173--182, 2017.

\bibitem[Hofst{\"{a}}tter et~al.(2020)Hofst{\"{a}}tter, Althammer,
  Schr{\"{o}}der, Sertkan, and Hanbury]{Hofstatter:2020}
Sebastian Hofst{\"{a}}tter, Sophia Althammer, Michael Schr{\"{o}}der, Mete
  Sertkan, and Allan Hanbury.
\newblock Improving efficient neural ranking models with cross-architecture
  knowledge distillation.
\newblock \emph{CoRR}, abs/2010.02666, 2020.
\newblock URL \url{https://arxiv.org/abs/2010.02666}.

\bibitem[Hornik et~al.(1989)Hornik, Stinchcombe, and
  White]{hornik1989multilayer}
Kurt Hornik, Maxwell Stinchcombe, and Halbert White.
\newblock Multilayer feedforward networks are universal approximators.
\newblock \emph{Neural networks}, 2\penalty0 (5):\penalty0 359--366, 1989.

\bibitem[J{\"a}rvelin and Kek{\"a}l{\"a}inen(2002)]{jarvelin2002cumulated}
Kalervo J{\"a}rvelin and Jaana Kek{\"a}l{\"a}inen.
\newblock Cumulated gain-based evaluation of ir techniques.
\newblock \emph{ACM Transactions on Information Systems (TOIS)}, 20\penalty0
  (4):\penalty0 422--446, 2002.

\bibitem[Johnson et~al.(2019)Johnson, Douze, and J{\'e}gou]{johnson2019billion}
Jeff Johnson, Matthijs Douze, and Herv{\'e} J{\'e}gou.
\newblock Billion-scale similarity search with {GPUs}.
\newblock \emph{IEEE Transactions on Big Data}, 7\penalty0 (3):\penalty0
  535--547, 2019.

\bibitem[Karpukhin et~al.(2020)Karpukhin, Oguz, Min, Lewis, Wu, Edunov, Chen,
  and Yih]{Karpukhin:2020}
Vladimir Karpukhin, Barlas Oguz, Sewon Min, Patrick Lewis, Ledell Wu, Sergey
  Edunov, Danqi Chen, and Wen-tau Yih.
\newblock Dense passage retrieval for open-domain question answering.
\newblock In \emph{Proceedings of the 2020 Conference on Empirical Methods in
  Natural Language Processing (EMNLP)}, pages 6769--6781, Online, November
  2020. Association for Computational Linguistics.

\bibitem[Khattab and Zaharia(2020)]{Khattab:2020}
Omar Khattab and Matei Zaharia.
\newblock \emph{ColBERT: Efficient and Effective Passage Search via
  Contextualized Late Interaction over BERT}, page 39–48.
\newblock Association for Computing Machinery, New York, NY, USA, 2020.
\newblock ISBN 9781450380164.

\bibitem[Kudo and Richardson(2018)]{Kudo:2018}
Taku Kudo and John Richardson.
\newblock {S}entence{P}iece: A simple and language independent subword
  tokenizer and detokenizer for neural text processing.
\newblock In \emph{Proceedings of the 2018 Conference on Empirical Methods in
  Natural Language Processing: System Demonstrations}, pages 66--71, Brussels,
  Belgium, November 2018. Association for Computational Linguistics.
\newblock \doi{10.18653/v1/D18-2012}.
\newblock URL \url{https://aclanthology.org/D18-2012}.

\bibitem[Kwiatkowski et~al.(2019)Kwiatkowski, Palomaki, Redfield, Collins,
  Parikh, Alberti, Epstein, Polosukhin, Kelcey, Devlin, Lee, Toutanova, Jones,
  Chang, Dai, Uszkoreit, Le, and Petrov]{Kwiatkowski:2019}
Tom Kwiatkowski, Jennimaria Palomaki, Olivia Redfield, Michael Collins, Ankur
  Parikh, Chris Alberti, Danielle Epstein, Illia Polosukhin, Matthew Kelcey,
  Jacob Devlin, Kenton Lee, Kristina~N. Toutanova, Llion Jones, Ming-Wei Chang,
  Andrew Dai, Jakob Uszkoreit, Quoc Le, and Slav Petrov.
\newblock Natural questions: a benchmark for question answering research.
\newblock \emph{Transactions of the Association of Computational Linguistics},
  2019.

\bibitem[Li et~al.(2022)Li, Lin, Oguz, Ghoshal, Lin, Mehdad, Yih, and
  Chen]{li2022citadel}
Minghan Li, Sheng-Chieh Lin, Barlas Oguz, Asish Ghoshal, Jimmy Lin, Yashar
  Mehdad, Wen-tau Yih, and Xilun Chen.
\newblock Citadel: Conditional token interaction via dynamic lexical routing
  for efficient and effective multi-vector retrieval.
\newblock \emph{arXiv preprint arXiv:2211.10411}, 2022.

\bibitem[Li et~al.(2023)Li, Lin, Ma, and Lin]{li2023slim}
Minghan Li, Sheng-Chieh Lin, Xueguang Ma, and Jimmy Lin.
\newblock Slim: Sparsified late interaction for multi-vector retrieval with
  inverted indexes.
\newblock \emph{arXiv preprint arXiv:2302.06587}, 2023.

\bibitem[Loshchilov and Hutter(2019)]{loshchilov2018decoupled}
Ilya Loshchilov and Frank Hutter.
\newblock Decoupled weight decay regularization.
\newblock In \emph{International Conference on Learning Representations}, 2019.
\newblock URL \url{https://openreview.net/forum?id=Bkg6RiCqY7}.

\bibitem[Luan et~al.(2021)Luan, Eisenstein, Toutanova, and Collins]{Yi:2021}
Yi~Luan, Jacob Eisenstein, Kristina Toutanova, and Michael Collins.
\newblock Sparse, dense, and attentional representations for text retrieval.
\newblock \emph{Transactions of the Association for Computational Linguistics},
  9:\penalty0 329--345, 2021.

\bibitem[Ma et~al.(2023)Ma, Wang, Yang, Wei, and Lin]{ma2023fine}
Xueguang Ma, Liang Wang, Nan Yang, Furu Wei, and Jimmy Lin.
\newblock Fine-tuning llama for multi-stage text retrieval.
\newblock \emph{arXiv preprint arXiv:2310.08319}, 2023.

\bibitem[MacAvaney et~al.(2019)MacAvaney, Yates, Cohan, and
  Goharian]{MacAvaney:2019}
Sean MacAvaney, Andrew Yates, Arman Cohan, and Nazli Goharian.
\newblock {CEDR}: Contextualized embeddings for document ranking.
\newblock In \emph{SIGIR}, 2019.

\bibitem[Menon et~al.(2022)Menon, Jayasumana, Rawat, Kim, Reddi, and
  Kumar]{Menon:2022DE}
Aditya Menon, Sadeep Jayasumana, Ankit~Singh Rawat, Seungyeon Kim, Sashank
  Reddi, and Sanjiv Kumar.
\newblock In defense of dual-encoders for neural ranking.
\newblock In Kamalika Chaudhuri, Stefanie Jegelka, Le~Song, Csaba Szepesvari,
  Gang Niu, and Sivan Sabato, editors, \emph{Proceedings of the 39th
  International Conference on Machine Learning}, volume 162 of
  \emph{Proceedings of Machine Learning Research}, pages 15376--15400. PMLR,
  17--23 Jul 2022.
\newblock URL \url{https://proceedings.mlr.press/v162/menon22a.html}.

\bibitem[Mitra and Craswell(2018)]{Mitra:2018}
Bhaskar Mitra and Nick Craswell.
\newblock An introduction to neural information retrieval.
\newblock \emph{Foundations and Trends{\textregistered} in Information
  Retrieval}, 13\penalty0 (1):\penalty0 1--126, 2018.
\newblock ISSN 1554-0669.
\newblock \doi{10.1561/1500000061}.

\bibitem[Mysore et~al.(2021)Mysore, Cohan, and Hope]{mysore2021multi}
Sheshera Mysore, Arman Cohan, and Tom Hope.
\newblock Multi-vector models with textual guidance for fine-grained scientific
  document similarity.
\newblock \emph{arXiv preprint arXiv:2111.08366}, 2021.

\bibitem[Nguyen et~al.(2016)Nguyen, Rosenberg, Song, Gao, Tiwary, Majumder, and
  Deng]{Nguyen:2016}
Tri Nguyen, Mir Rosenberg, Xia Song, Jianfeng Gao, Saurabh Tiwary, Rangan
  Majumder, and Li~Deng.
\newblock {MS} {MARCO:} {A} human generated machine reading comprehension
  dataset.
\newblock In Tarek~Richard Besold, Antoine Bordes, Artur~S. d'Avila Garcez, and
  Greg Wayne, editors, \emph{Proceedings of the Workshop on Cognitive
  Computation: Integrating neural and symbolic approaches 2016}, volume 1773 of
  \emph{{CEUR} Workshop Proceedings}. CEUR-WS.org, 2016.

\bibitem[Ni et~al.(2021)Ni, Qu, Lu, Dai, {\'{A}}brego, Ma, Zhao, Luan, Hall,
  Chang, and Yang]{t5_t3}
Jianmo Ni, Chen Qu, Jing Lu, Zhuyun Dai, Gustavo~Hern{\'{a}}ndez {\'{A}}brego,
  Ji~Ma, Vincent~Y. Zhao, Yi~Luan, Keith~B. Hall, Ming{-}Wei Chang, and Yinfei
  Yang.
\newblock Large dual encoders are generalizable retrievers.
\newblock \emph{CoRR}, abs/2112.07899, 2021.
\newblock URL \url{https://arxiv.org/abs/2112.07899}.

\bibitem[Nogueira and Cho(2019)]{Nogueira:2019}
Rodrigo Nogueira and Kyunghyun Cho.
\newblock Passage re-ranking with {BERT}.
\newblock \emph{CoRR}, abs/1901.04085, 2019.
\newblock URL \url{http://arxiv.org/abs/1901.04085}.

\bibitem[Overwijk et~al.(2022)Overwijk, Xiong, and
  Callan]{overwijk2022clueweb22}
Arnold Overwijk, Chenyan Xiong, and Jamie Callan.
\newblock Clueweb22: 10 billion web documents with rich information.
\newblock In \emph{Proceedings of the 45th International ACM SIGIR Conference
  on Research and Development in Information Retrieval}, pages 3360--3362,
  2022.

\bibitem[Pang et~al.(2016)Pang, Lan, Guo, Xu, Wan, and Cheng]{match_pyramid}
Liang Pang, Yanyan Lan, Jiafeng Guo, Jun Xu, Shengxian Wan, and Xueqi Cheng.
\newblock Text matching as image recognition.
\newblock In \emph{Proceedings of the AAAI Conference on Artificial
  Intelligence}, volume~30, 2016.

\bibitem[Qian et~al.(2022)Qian, Lee, Duddu, Dai, Brahma, Naim, Lei, and
  Zhao]{qian2022multi}
Yujie Qian, Jinhyuk Lee, Sai Meher~Karthik Duddu, Zhuyun Dai, Siddhartha
  Brahma, Iftekhar Naim, Tao Lei, and Vincent~Y Zhao.
\newblock Multi-vector retrieval as sparse alignment.
\newblock \emph{arXiv preprint arXiv:2211.01267}, 2022.

\bibitem[Qu et~al.(2021)Qu, Ding, Liu, Liu, Ren, Zhao, Dong, Wu, and
  Wang]{Qu:2021}
Yingqi Qu, Yuchen Ding, Jing Liu, Kai Liu, Ruiyang Ren, Wayne~Xin Zhao, Daxiang
  Dong, Hua Wu, and Haifeng Wang.
\newblock Rocket{QA}: An optimized training approach to dense passage retrieval
  for open-domain question answering.
\newblock In Kristina Toutanova, Anna Rumshisky, Luke Zettlemoyer, Dilek
  Hakkani{-}T{\"{u}}r, Iz~Beltagy, Steven Bethard, Ryan Cotterell, Tanmoy
  Chakraborty, and Yichao Zhou, editors, \emph{Proceedings of the 2021
  Conference of the North American Chapter of the Association for Computational
  Linguistics: Human Language Technologies, {NAACL-HLT} 2021, Online, June
  6-11, 2021}, pages 5835--5847. Association for Computational Linguistics,
  2021.

\bibitem[Radev et~al.(2002)Radev, Qi, Wu, and Fan]{Radev:2002}
Dragomir~R. Radev, Hong Qi, Harris Wu, and Weiguo Fan.
\newblock Evaluating web-based question answering systems.
\newblock In \emph{Proceedings of the Third International Conference on
  Language Resources and Evaluation ({LREC}{'}02)}, Las Palmas, Canary Islands
  - Spain, May 2002. European Language Resources Association (ELRA).

\bibitem[Reimers and Gurevych(2019)]{Reimers:2019}
Nils Reimers and Iryna Gurevych.
\newblock Sentence-bert: Sentence embeddings using {S}iamese {BERT}-networks.
\newblock In \emph{Proceedings of the 2019 Conference on Empirical Methods in
  Natural Language Processing}. Association for Computational Linguistics, 11
  2019.
\newblock URL \url{https://arxiv.org/abs/1908.10084}.

\bibitem[Ren et~al.(2021)Ren, Qu, Liu, Zhao, She, Wu, Wang, and Wen]{Ren:2021}
Ruiyang Ren, Yingqi Qu, Jing Liu, Wayne~Xin Zhao, QiaoQiao She, Hua Wu, Haifeng
  Wang, and Ji-Rong Wen.
\newblock {R}ocket{QA}v2: A joint training method for dense passage retrieval
  and passage re-ranking.
\newblock In \emph{Proceedings of the 2021 Conference on Empirical Methods in
  Natural Language Processing}, pages 2825--2835, Online and Punta Cana,
  Dominican Republic, November 2021. Association for Computational Linguistics.

\bibitem[Rendle et~al.(2020)Rendle, Krichene, Zhang, and
  Anderson]{rendle2020neural}
Steffen Rendle, Walid Krichene, Li~Zhang, and John Anderson.
\newblock Neural collaborative filtering vs. matrix factorization revisited.
\newblock In \emph{Proceedings of the 14th ACM Conference on Recommender
  Systems}, pages 240--248, 2020.

\bibitem[Robertson and Zaragoza(2009)]{Robertson:2009}
Stephen Robertson and Hugo Zaragoza.
\newblock The probabilistic relevance framework: Bm25 and beyond.
\newblock \emph{Found. Trends Inf. Retr.}, 3\penalty0 (4):\penalty0 333–389,
  April 2009.
\newblock ISSN 1554-0669.

\bibitem[Santhanam et~al.(2022)Santhanam, Khattab, Saad-Falcon, Potts, and
  Zaharia]{Santhanam:2022}
Keshav Santhanam, Omar Khattab, Jon Saad-Falcon, Christopher Potts, and Matei
  Zaharia.
\newblock {C}ol{BERT}v2: Effective and efficient retrieval via lightweight late
  interaction.
\newblock In \emph{Proceedings of the 2022 Conference of the North American
  Chapter of the Association for Computational Linguistics: Human Language
  Technologies}, pages 3715--3734, Seattle, United States, July 2022.
  Association for Computational Linguistics.
\newblock \doi{10.18653/v1/2022.naacl-main.272}.
\newblock URL \url{https://aclanthology.org/2022.naacl-main.272}.

\bibitem[Thakur et~al.(2021)Thakur, Reimers, R{\"u}ckl{\'e}, Srivastava, and
  Gurevych]{thakur2021beir}
Nandan Thakur, Nils Reimers, Andreas R{\"u}ckl{\'e}, Abhishek Srivastava, and
  Iryna Gurevych.
\newblock {BEIR}: A heterogenous benchmark for zero-shot evaluation of
  information retrieval models.
\newblock \emph{arXiv preprint arXiv:2104.08663}, 2021.

\bibitem[Tolstikhin et~al.(2021)Tolstikhin, Houlsby, Kolesnikov, Beyer, Zhai,
  Unterthiner, Yung, Steiner, Keysers, Uszkoreit, et~al.]{mlp_mixer}
Ilya~O Tolstikhin, Neil Houlsby, Alexander Kolesnikov, Lucas Beyer, Xiaohua
  Zhai, Thomas Unterthiner, Jessica Yung, Andreas Steiner, Daniel Keysers,
  Jakob Uszkoreit, et~al.
\newblock {MLP}-mixer: An all-mlp architecture for vision.
\newblock \emph{Advances in Neural Information Processing Systems},
  34:\penalty0 24261--24272, 2021.

\bibitem[Turc et~al.(2019)Turc, Chang, Lee, and Toutanova]{Turc:2019}
Iulia Turc, Ming{-}Wei Chang, Kenton Lee, and Kristina Toutanova.
\newblock Well-read students learn better: The impact of student initialization
  on knowledge distillation.
\newblock \emph{CoRR}, abs/1908.08962, 2019.
\newblock URL \url{http://arxiv.org/abs/1908.08962}.

\bibitem[Vaswani et~al.(2017)Vaswani, Shazeer, Parmar, Uszkoreit, Jones, Gomez,
  Kaiser, and Polosukhin]{Vaswani:2017}
Ashish Vaswani, Noam Shazeer, Niki Parmar, Jakob Uszkoreit, Llion Jones,
  Aidan~N. Gomez, \L{}ukasz Kaiser, and Illia Polosukhin.
\newblock Attention is all you need.
\newblock In \emph{Proceedings of the 31st International Conference on Neural
  Information Processing Systems}, NIPS'17, page 6000–6010, Red Hook, NY,
  USA, 2017. Curran Associates Inc.
\newblock ISBN 9781510860964.

\bibitem[Xiong et~al.(2017)Xiong, Dai, Callan, Liu, and Power]{Xiong:2017}
Chenyan Xiong, Zhuyun Dai, Jamie Callan, Zhiyuan Liu, and Russell Power.
\newblock End-to-end neural ad-hoc ranking with kernel pooling.
\newblock In \emph{Proceedings of the 40th International ACM SIGIR Conference
  on Research and Development in Information Retrieval}, SIGIR '17, page
  55–64, New York, NY, USA, 2017. Association for Computing Machinery.
\newblock ISBN 9781450350228.

\bibitem[Yun et~al.(2020)Yun, Bhojanapalli, Rawat, Reddi, and Kumar]{Yun:2020}
Chulhee Yun, Srinadh Bhojanapalli, Ankit~Singh Rawat, Sashank Reddi, and Sanjiv
  Kumar.
\newblock Are transformers universal approximators of sequence-to-sequence
  functions?
\newblock In \emph{International Conference on Learning Representations}, 2020.

\bibitem[Zhang and Rui(2013)]{zhang2013image}
Lei Zhang and Yong Rui.
\newblock Image search—from thousands to billions in 20 years.
\newblock \emph{ACM Transactions on Multimedia Computing, Communications, and
  Applications (TOMM)}, 9\penalty0 (1s):\penalty0 1--20, 2013.

\bibitem[Zhu et~al.(2023)Zhu, Lin, Anand, Calderwood, Clausen-Brown, Lueck,
  Yim, and Wu]{zhu2023explicit}
Xiaofeng Zhu, Thomas Lin, Vishal Anand, Matthew Calderwood, Eric Clausen-Brown,
  Gord Lueck, Wen-wai Yim, and Cheng Wu.
\newblock Explicit and implicit semantic ranking framework.
\newblock In \emph{Companion Proceedings of the ACM Web Conference 2023}, pages
  326--330, 2023.

\end{thebibliography}

\appendix

\newpage
\appendix
\onecolumn

\section{Experimental details}
\label{sec:exp_detail}

\subsection{Hyper-parameters}
\label{app:hyperparams}

The main hyperparameters for LITE are the MLP widths.
For Separable LITE (cf.~\eqref{eq:sep_lite_row} and \eqref{eq:sep_lite_col}), if the input dot-product matrix has shape $L_1\times L_2$, then $\mathbf{W}_1$ has shape $(m_2,L_2)$, $\mathbf{W}_2$ has shape $(L_2,m_2)$, $\mathbf{W}_3$ has shape $(m_1,L_1)$, and $\mathbf{W}_4$ has shape $(L_1,m_1)$.
In this work, we let $m_1=360$ and $m_2=2400$ in most experiments for simplicity, but we also note that much smaller widths can already give a high accuracy while also reducing the latency (cf. \Cref{tab:latency}).

\subsection{Training details}
\label{sec:loss_metric}

Here we first define the loss functions used in our experiments. 

For simplicity, let us first consider the triplet setting, where we are given a query $q$, a positive document $d_+$, and a negative document $d_-$. 
Suppose the teacher score is given by $\mathbf{t}=(t_+,t_-)$, and the student score is $\mathbf{s}=(s_+,s_-)$. 
The margin MSE loss is defined as $\left((t_+-t_-)-(s_+-s_-)\right)^2$, i.e., it calculates the teacher score margin and student score margin, and applies a squared loss.
The KL loss first calculates the teacher and student probability distributions as below
\begin{align*}
    \mathbf{p}^{(t)} & =\left(\frac{\exp(t_+)}{\exp(t_+)+\exp(t_-)},\frac{\exp(t_-)}{\exp(t_+)+\exp(t_-)}\right), \\
    \mathbf{p}^{(s)} & =\left(\frac{\exp(s_+)}{\exp(s_+)+\exp(s_-)},\frac{\exp(s_-)}{\exp(s_+)+\exp(s_-)}\right),
\end{align*}
and then calculates the KL divergence ${\rm KL}(\mathbf{p}^{(t)}||\mathbf{p}^{(s)})$.

In our NQ experiments, we use one positive document and multiple negative documents. In this case the KL loss is defined similarly, while for the margin MSE loss we consider the margins between the positive document and every negative document.
Formally, suppose there are $N$ documents, the first one is positive while the remaining ones are negative, and let $t_i$ and $s_i$ denote the teacher and student scores for the $i$-th document, then we consider 
\begin{equation*}
    \sum_{i=2}^N((t_1-t_i)-(s_1-s_i))^2.
\end{equation*}

It is also an interesting open direction to try other training frameworks, such as sRank \citep{zhu2023explicit}.

On the optimization algorithm, we use AdamW~\citep{loshchilov2018decoupled} with batch size 128, peak learning rate $2.8 \times 10^{-5}$, weight decay 0.01, and 1.5 million steps.
We use a linear learning rate warm up of 30000 steps, then a linear learning rate decay.

\subsection{Results with different loss functions}
\label{sec:exp_losses}

Here we present results on different scorers and loss functions.

First, \Cref{tab:msmarco_mrr_loss} includes results on MS MARCO.

\begin{table}[h]
    \centering
    \caption{MS MARCO Dev MRR@10. Separable LITE achieves the best results among factorized (non-CE) models.}
    \label{tab:msmarco_mrr_loss}
    \vskip 0.1in
    \scalebox{0.95}{
    \begin{tabular}{@{}lcc}
        \toprule
        \textbf{Scorer} & \textbf{KL} & \textbf{Margin MSE} \\
        \midrule
        CE student & 0.394 & 0.395 \\
        \midrule
        DE & 0.355 & 0.350 \\
        ColBERT & 0.383 & 0.378 \\
        Separable LITE & \best{0.388} & \best{0.393} \\
        \bottomrule
    \end{tabular}}
\end{table}

For context, the {\tt T2} teacher~\citep{Hofstatter:2020} achieves a Dev MRR@10 of 0.399.
Even a CE student (with 6 layers and token dimension 768) cannot match this teacher performance: the best MRR@10 we get is 0.395.

We also note that separable LITE get good results for both the KL loss and margin MSE loss, while other scorers seem to prefer only one loss.
It is interesting to understand the effects of loss functions.

\begin{table}[h]
    \centering
    \caption{Natural Questions Dev MRR@10. Separable LITE achieves the best results both in direct training and distillation settings.}
    \label{tab:nq_mrr}
    \vskip 0.1in
    \scalebox{0.95}{
    \renewcommand{\arraystretch}{1.0}
    \begin{tabular}{lccc}
        \toprule
        \textbf{Scorer} & \textbf{Cross Entropy (one-hot labels)} & \textbf{KL (distillation)} & \textbf{Margin MSE} \\
        \midrule
        DE & 0.678 & 0.699 & 0.699 \\
        ColBERT & 0.690 & \best{0.754} & 0.756 \\
        Separable LITE & \best{0.710} & 0.741 & \best{0.769} \\
        \bottomrule
    \end{tabular}}
\end{table}

\Cref{tab:nq_mrr} includes results on NQ.
Here we report results in two settings: direct training with 1-hot labels and the cross entropy loss, and distillation training with the KL loss and margin MSE loss.
Separable LITE achieves the best results for both the cross-entropy loss and margin MSE loss; although ColBERT performs better with the KL loss, it gives lower scores than the margin MSE loss.

\subsection{Model ablations}
\label{sec:ablations}

\paragraph{Using top-$k$ aligned document tokens in ColBERT.}

Given query Transformer embedding vectors $\mathbf{q}_1,\ldots,\mathbf{q}_{L_1}$ and document Transformer embedding vectors $\mathbf{d}_1,\ldots,\mathbf{d}_{L_2}$, recall that ColBERT performs a sum-max reduction:
$$ \sum_{i \in [L_1]} \max_{j \in [L_2]} \mathbf{q}_i^\top \mathbf{d}_j. $$
In other words, for each query token $\mathbf{q}_i$, ColBERT finds the most-aligned document embedding vector and includes their dot-product in the score. \citet{qian2022multi} suggest using top-$k$ aligned document tokens for each query token; here we try $k=2,4,8$ on MS MARCO, but do not notice significant improvement compared with $k=1$.

\begin{table}[h]
    \centering
    \begin{tabular}{ccccc}
        \hline
        $k$ & 1 & 2 & 4 & 8 \\
        \hline
        MRR@10 & 0.383 & 0.378 & 0.380 & 0.382 \\
        \hline
    \end{tabular}
    \caption{Dev MRR@10 on MS MARCO with different values of $k$. We find that $k=1$ (i.e., the original ColBERT) is better than other options we try ($k=2,4,8$).}
    \label{tab:my_label}
\end{table}

\paragraph{Freezing query and document encoders.}

Recall that we use pretrained BERT models for query and document encoding, and moreover in all experiments above we also fine-tune the pretrained Transformers on MS MARCO and NQ. Here we explore performance of different scorers when the query and document Transformer encoders are frozen (i.e., pre-trained but not fine-tuned on MS MARCO).

When the query and document encoders are frozen, ColBERT does not require any additional fine-tuning since the sum-max function does not include any weights. In this case, ColBERT can achieve Dev MRR@10 score 0.112 on MS MARCO. 

For separable LITE, if we freeze the query and document Transformer encoders and only fine tune the separable LITE scorer (i.e., $\mathbf{W}_1,\mathbf{b}_1,\mathbf{W}_2,\mathbf{b}_2,\mathbf{W}_3,\mathbf{b}_3,\mathbf{W}_4,\mathbf{b}_4$ in \eqref{eq:sep_lite_row} and \eqref{eq:sep_lite_col}), then it can achieve Dev MRR@10 score 0.188 on MS MARCO, which is much better than ColBERT.

\subsection{KNRM results}
\label{sec:knrm_results}

For KNRM, following \cite{Xiong:2017}, we use $K=11$ kernels, where $\mu_1=0.9$, $\mu_2=0.7$, $\ldots$, $\mu_{10}=-0.9$ with $\sigma_1=\cdots=\sigma_{10}=0.1$, and $\mu_{11}=1.0$ with $\sigma_{11}=10^{-3}$.
We hold $\mu_k$ and $\sigma_k$ fixed and only train $\mathbf{w}$.

We report MRR@10 and nDCG@10 scores on in-domain tasks in \Cref{tab:knrm_in_domain}.
KNRM achieves similar scores to ColBERT overall, while separable LITE is more accurate than KNRM on all benchmarks.

\begin{table*}[t]
    \centering
    \caption{MRR@10 and nDCG@10 scores for in-domain tasks. KNRM is similar to ColBERT overall, while worse than separable LITE on all tasks.}
    \label{tab:knrm_in_domain}
    \begin{tabular}{lcccccccc}
        \toprule
         & \multicolumn{2}{c}{\textbf{MS MARCO}} & \multicolumn{2}{c}{\textbf{DL 2019}} & \multicolumn{2}{c}{\textbf{DL 2020}} & \multicolumn{2}{c}{\textbf{NQ}} \\
        \textbf{Scorer} & \textbf{MRR} & \textbf{nDCG} & \textbf{MRR} & \textbf{nDCG} & \textbf{MRR} & \textbf{nDCG} & \textbf{MRR} & \textbf{nDCG} \\
        \midrule
        ColBERT & 0.383 & 0.442 & 0.878 & 0.753 & 0.860 & 0.731 & 0.756 & 0.689 \\
        KNRM & 0.390 & 0.448 & 0.859 & 0.744 & 0.858 & 0.730 & 0.759 & 0.682 \\
        Sep LITE & \best{0.393} & \best{0.452} & \best{0.898} & \best{0.765} & \best{0.873} & \best{0.756} & \best{0.769} & \best{0.693} \\
        \bottomrule
    \end{tabular}    
\end{table*}

Moreover, separable LITE is much better than KNRM on zero-shot transfer: it is better than KNRM on 12 out of 14 datasets, as shown in \Cref{tab:beir_ndcg_knrm}.

\begin{table}[!t]
    \centering
    \caption{BEIR nDCG@10. Separable LITE is better than KNRM on 12 out of 14 datasets.}
    \label{tab:beir_ndcg_knrm}
    \renewcommand{\arraystretch}{1.0}
    \begin{tabular}{@{}lccccc}
        \toprule
        \textbf{Dataset} & \textbf{KNRM} & \textbf{Separable LITE} \\
        \midrule
        T-COVID & 0.741 & \best{0.763} \\
        NFCorpus & 0.353 & \best{0.358} \\
        NQ & 0.526 & \best{0.540} \\
        HotpotQA & 0.678 & \best{0.681} \\
        FiQA-2018 & 0.328 & \best{0.336} \\
        ArguAna & \best{0.446} & 0.424 \\
        Touch\'{e}-2020 & 0.301 & \best{0.305} \\
        CQAD & 0.367 & \best{0.374} \\
        Quora & 0.239 & \best{0.839} \\
        DBPedia & 0.420 & \best{0.434} \\
        SCIDOCS & 0.159 & \best{0.164} \\
        FEVER & 0.715 & \best{0.788} \\
        C-FEVER & 0.199 & \best{0.213} \\
        SciFact & \best{0.645} & 0.633 \\
        \bottomrule
    \end{tabular}    
\end{table}

\subsection{BEIR results of small separable LITE}

\Cref{tab:beir_ndcg_small_sep} shows BEIR results for small separable LITE introduce in \Cref{tab:latency}.
It is better than ColBERT on 8 out of 14 datasets.

\begin{table}[!t]
    \centering
    \caption{BEIR nDCG@10. Separable LITE is better than ColBERT on 11 out of 14 datasets.}
    \label{tab:beir_ndcg_small_sep}
    \begin{tabular}{@{}lcc|c}
        \toprule
        \textbf{Dataset} & \textbf{ColBERT} & \textbf{Small sep LITE} \\
        \midrule
        T-COVID & 0.761 & \best{0.767} \\
        NFCorpus & \best{0.356} & 0.353 \\
        NQ & 0.525 & \best{0.538} \\
        HotpotQA & \best{0.685} & 0.680 \\
        FiQA-2018 & \best{0.330} & 0.329 \\
        ArguAna & \best{0.433} & 0.433 \\
        Touch\'{e}-2020 & 0.274 & \best{0.298}  \\
        CQAD & 0.363 & \best{0.374} \\
        Quora & 0.767 & \best{0.836} \\
        DBPedia & 0.410 & \best{0.436} \\
        SCIDOCS & 0.155 & \best{0.163} \\
        FEVER & \best{0.782} & 0.772 \\
        C-FEVER & 0.190 & \best{0.214} \\
        SciFact & \best{0.667} & 0.622 \\
        \bottomrule
    \end{tabular}    
\end{table}

\section{Proof of \Cref{fact:lite_univ_approx}}
\label{sec:lite_univ_approx_proof}

Here we prove \Cref{fact:lite_univ_approx}.
We first restate it here and also include a universal approximation result without positional encodings.
\begin{theorem}[Universal approximation with LITE]\label{fact:lite_univ_approx_full}
    Let $s:\R^{(P\times L_1)\times(P\times L_2)}\to\R$ denote a continuous scoring function with a compact support $\Omega$ and $L_1,L_2\ge2$.
    For any $\mathcal{F}\in\{\mathcal{F}_{\rm f},\mathcal{F}_{\rm s}\}$ and any $\epsilon>0$, there exists a query Transformer $T_1:\R^{P\times L_1}\to\R^{P\times L_1}$, a document Transformer $T_2:\R^{P\times L_2}\to\R^{P\times L_2}$, and a scorer $f\in\mathcal{F}$, such that 
    \begin{align*}
        \int_\Omega\left(f\left(T_1(\mathbf{X})^\top T_2(\mathbf{Y})\right)-s(\mathbf{X},\mathbf{Y})\right)^2{\rm d}(\mathbf{X},\mathbf{Y})\le\epsilon.
    \end{align*}
    
    Under the same conditions, there also exist positional encoding matrices $\mathbf{E}\in\R^{P\times L_1}$ and $\mathbf{F}\in\R^{P\times L_2}$, a query Transformer $T_1:\R^{P\times L_1}\to\R^{P\times L_1}$ and a pooling function ${\sf pool}_1:\R^{P\times L_1}\to\R^{P\times2}$, a document Transformer $T_2:\R^{P\times L_2}\to\R^{P\times L_2}$ and a pooling function ${\sf pool}_2:\R^{P\times L_2}\to\R^{P\times2}$, and a scorer $f\in\mathcal{F}$, such that 
    \begin{align*}
        \int_\Omega\left(f\left({\sf pool}_1(T_1(\mathbf{X}+\mathbf{E}))^\top {\sf pool}_2(T_2(\mathbf{Y}+\mathbf{F}))\right)-s(\mathbf{X},\mathbf{Y})\right)^2{\rm d}(\mathbf{X},\mathbf{Y})\le\epsilon.
    \end{align*}
\end{theorem}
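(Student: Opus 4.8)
The plan is to adapt the universal-approximation template of \citep{Yun:2020} for Transformers, exploiting one structural observation: although a Transformer without positional encodings is only permutation-equivariant, the LITE scorer $f$ acts on the \emph{full ordered} matrix $\mathbf{S}=T_1(\mathbf{X})^\top T_2(\mathbf{Y})$, whose row $k$ is tied to query token $k$ and whose column $l$ is tied to document token $l$, so the composition can be order-sensitive even when the encoders are not. As preliminaries I would (i) rescale so that $\Omega\subseteq[0,1]^{P\times L_1}\times[0,1]^{P\times L_2}$ and replace $s$ by a function $\bar{s}$ constant on the cells of a sufficiently fine grid, reducing the problem to matching $\bar{s}$ on finitely many cell representatives up to a small near-boundary ``bad set'', with the $\ell_2$-error bookkeeping exactly as in \citep{Yun:2020}; and (ii) record two elementary facts about the scorer classes — a shallow ReLU net in $\mathcal{F}_{\sigma,n}$ can be made injective on any prescribed finite subset of $\R^n$, and can interpolate any prescribed real values on a finite set — both following from standard two-layer ReLU approximation together with the fact that every linear functional lies in $\mathcal{F}_{\sigma,n}$ (since $\alpha^\top\mathbf{z}=\sigma(\alpha^\top\mathbf{z})-\sigma(-\alpha^\top\mathbf{z})$).

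For the first statement (no positional encodings, no pooling) I would take $T_1,T_2$ purely token-wise, using the feedforward blocks of the \citep{Yun:2020} architecture with the attention blocks reduced to the identity (value/output weights set to zero): $T_1$ quantizes each token $\mathbf{x}_k$ to its cell $\hat{\mathbf{x}}_k$ and then maps $\hat{\mathbf{x}}_k$ to $a(\hat{\mathbf{x}}_k)\,\mathbf{e}_1$, where $a$ assigns a distinct prime to each of the finitely many cells, and $T_2$ does the same with a prime-valued index $a'$ drawn from primes disjoint from those used by $a$. Then $\mathbf{S}$ is the rank-one matrix with $\mathbf{S}_{k,l}=a(\hat{\mathbf{x}}_k)\,a'(\hat{\mathbf{y}}_l)$, and unique factorization together with disjointness of the two prime sets shows that the ordered matrix $\mathbf{S}$ determines the ordered pair $(\hat{\mathbf{X}},\hat{\mathbf{Y}})$; hence $\mathbf{S}\mapsto\bar{s}(\hat{\mathbf{X}},\hat{\mathbf{Y}})$ is a well-defined map on a finite set, realized by a single net in $\mathcal{F}_{\sigma,L_1 L_2}$ for $\mathcal{F}_{\rm f}$. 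For $\mathcal{F}_{\rm s}$, every row of $\mathbf{S}$ equals $a(\hat{\mathbf{x}}_k)$ times the common vector $(a'(\hat{\mathbf{y}}_l))_l$, and $(s,\mathbf{v})\mapsto s\mathbf{v}$ is injective on the relevant finite domain by the same argument; so a shallow net $f_1$ injective on the finite set of possible rows makes $f_1(\mathbf{S})\in\R^{L_1}$ determine $(\hat{\mathbf{X}},\hat{\mathbf{Y}})$, and $f_2$ decodes it and outputs $\bar{s}$.

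For the second statement (positional encodings, pooling to two tokens) I would add positional encodings $\mathbf{E},\mathbf{F}$ and build $T_1$ in three stages following \citep{Yun:2020}: feedforward layers quantizing the position-augmented tokens; attention layers realizing a contextual mapping, so that output token $k$ carries a distinct identifier of the pair (position $k$, whole quantized sequence $\hat{\mathbf{X}}$); and a final token-wise feedforward read-out that sends the identifier at position $1$ to $c(\hat{\mathbf{X}})\,\mathbf{e}_1$ for an injective real index $c$ of the ordered quantized sequence, the identifier at position $2$ to the constant $\mathbf{e}_1$, and all remaining identifiers to $0$ — a well-defined finite-set map precisely because the identifiers are all distinct. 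Taking ${\sf pool}_1$ to select output tokens $1$ and $2$ makes ${\sf pool}_1(T_1(\mathbf{X}+\mathbf{E}))$ have columns $c(\hat{\mathbf{X}})\,\mathbf{e}_1$ and $\mathbf{e}_1$, and symmetrically ${\sf pool}_2(T_2(\mathbf{Y}+\mathbf{F}))$ have columns $\mathbf{e}_1$ and $d(\hat{\mathbf{Y}})\,\mathbf{e}_1$; then $\mathbf{S}$ is the $2\times2$ matrix with $\mathbf{S}_{1,1}=c(\hat{\mathbf{X}})$, $\mathbf{S}_{2,2}=d(\hat{\mathbf{Y}})$, $\mathbf{S}_{1,2}=c(\hat{\mathbf{X}})d(\hat{\mathbf{Y}})$, $\mathbf{S}_{2,1}=1$, which determines $(\hat{\mathbf{X}},\hat{\mathbf{Y}})$. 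Again a single net handles $\mathcal{F}_{\rm f}$; for $\mathcal{F}_{\rm s}$, pick $f_1$ a generic linear functional that is injective on the finite set of possible first rows of $\mathbf{S}$, and let $f_2$ decode $(\hat{\mathbf{X}},\hat{\mathbf{Y}})$ from the first coordinate of $f_1(\mathbf{S})$ and output $\bar{s}$.

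The main obstacle is the construction of $T_1$ in the second statement: adapting the contextual mapping of \citep{Yun:2020} so that, after pooling down to only two tokens, one token is a lossless encoding of the entire ordered quantized query and the other a fixed reference vector. The delicate points are (i) the interplay among positional encodings, token quantization, and the selectivity of the attention maps that build the contextual map; (ii) keeping the measure of the bad set — where quantization misbehaves near cell boundaries, or where the distinct-identifier property could fail — small enough that the total $\ell_2$ error stays below $\epsilon$; and (iii) checking that the required token-wise read-out, as well as the scorer networks, fit within the stated architectures. Everything downstream of obtaining such an encoding — the bilinear bookkeeping producing $\mathbf{S}$, and the finite-set interpolation by shallow ReLU nets — is routine.
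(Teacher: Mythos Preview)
Your proposal is correct but takes a genuinely different route from the paper for the first statement. The paper does \emph{not} bypass attention there: it applies the Yun et al.\ contextual mapping $g_{\rm c},h_{\rm c}$ to the quantized inputs, then uses an extra feedforward step to overwrite one query column by a fixed reference vector $\mathbf{v}$ and one document column by $\mathbf{u}$, so that a distinguished row and column of $\mathbf{S}$ carry the contextual scalars $\alpha(\mathbf{X}),\beta(\mathbf{Y})$ which together pin down the ordered inputs. You instead zero out attention entirely and tag each quantized token with a distinct prime, letting unique factorization of $\mathbf{S}_{k,l}=a(\hat{\mathbf{x}}_k)\,a'(\hat{\mathbf{y}}_l)$ recover the ordered pair directly from the ordered matrix. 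Your construction is more elementary and, as a side benefit, does not need to discard the set $\mathbb{G}_\delta\setminus\widetilde{\mathbb{G}}_\delta$ of inputs with repeated columns that the contextual-mapping lemma forces the paper to exclude; the only cost is that your encoders do not really exercise the Transformer architecture. For the second statement both arguments rest on the Yun et al.\ contextual mapping with positional encodings so that a single token identifier determines the whole ordered sequence; the paper's pooling directly emits the pair $(\mathbf{v},\,g_{\rm c}(\mathbf{X})_{:,1})$ (and symmetrically $(\mathbf{u},\,h_{\rm c}(\mathbf{Y})_{:,1})$), whereas you insert an extra token-wise read-out to convert identifiers into scalar indices $c(\hat{\mathbf{X}}),d(\hat{\mathbf{Y}})$ and a constant marker before a pure token-selection pool --- the same idea with slightly different packaging.
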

Our proof is based on the analysis of \citep{Yun:2020}: they showed that Transformer networks are universal approximators of continuous and compactly-supported sequence-to-sequence functions.
In our case, we need to show universal approximation with the dot-product matrix; to this end, we actually need a few technical lemmas from \citep{Yun:2020}, as detailed below.

Without loss of generality, we assume the support of the ground-truth scoring function is contained in $[0,1)^{P \times L_1} \times [0,1)^{P \times L_2}$.
The first step is to replace the ground-truth scoring function $s$ with a piece-wise constant function: let $\delta>0$ be small enough, and let 
\begin{align}\label{eq:s_delta}
    s_\delta(\mathbf{X},\mathbf{Y}):=\sum_{\mathbf{X}' \in \mathbb{G}_\delta, \mathbf{Y}' \in \mathbb{H}_\delta}s(\mathbf{X}',\mathbf{Y}')\mathds{1}\left[\mathbf{X}\in\mathbb{C}_{\mathbf{X}'} \textrm{ and }\mathbf{Y}\in\mathbb{C}_{\mathbf{Y}'}\right],
\end{align}
where $\mathbf{X}\in[0,1)^{P \times L_1}$, and $\mathbf{Y}\in[0,1)^{P \times L_2}$, and $\mathbb{G}_\delta:=\{0,\delta,\ldots,1-\delta\}^{P \times L_1}$, and $\mathbb{H}_\delta:=\{0,\delta,\ldots,1-\delta\}^{P \times L_2}$, and $\mathbb{C}_{\mathbf{X}'}:=\prod_{j=1}^P\prod_{k=1}^{L_1}[X'_{j,k},X'_{j,k}+\delta)$, and $\mathbb{C}_{\mathbf{Y}'}:=\prod_{j=1}^P\prod_{k=1}^{L_2}[Y'_{j,k},Y'_{j,k}+\delta)$.
Since $s$ is continuous, if $\delta$ is small enough, it holds that $s_\delta$ is a good approximation of $s$.

Next we follow \citep{Yun:2020} and try to approximate $s_\delta$ using LITE models based on \emph{modified} Transformers.
Recall that a standard Transformer uses softmax in attention layers and ReLU activation in MLPs; by contrast, in a modified Transformer, we use hardmax in attention layers, and in MLPs we are allowed to use activation functions from $\Phi$ which consists of piece-wise linear functions with at most three pieces where at least one piece is a constant.
Such a modified Transformer can then be approximated by a standard Transformer \citep[Lemma 9]{Yun:2020}.

Here are two key lemmas from \citep{Yun:2020}.
For simplicity, we state them for the query Transformer, but they will also be applied to the document Transformer.

The following lemma ensures that there exists a modified Transformer that can quantize the input domain, and thus we can just work with $\mathbb{G}_\delta$.
Similarly, on the document side, we can focus on $\mathbb{H}_\delta$.
\begin{lemma}[\citep{Yun:2020} Lemma 5]
\label{fact:quantization}
There exists a feedforward network $g_{\mathrm{q}}:[0,1)^{P \times L_1} \to \mathbb{G}_\delta$ with activations from $\Phi$, such that for any entry $1 \le i\le P$ and any $1\le j\le L_1$, it holds that $g_{\mathrm{q}}(\mathbf{X})_{i,j}=k\delta$ if $X_{i,j}\in[k\delta,(k+1)\delta)$, $k=0,\ldots,1/\delta-1$.
\end{lemma}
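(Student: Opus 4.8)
The plan is to build $g_{\mathrm q}$ as an entrywise ``floor to the $\delta$-grid'' map. Since $1/\delta$ is a positive integer (implicit in the definition of $\mathbb{G}_\delta$), in each coordinate we want $x\mapsto\lfloor x/\delta\rfloor\delta\in\{0,\delta,\ldots,1-\delta\}$. A token-wise feedforward block inside the Transformer acts on each of the $L_1$ columns by the same map $\R^P\to\R^P$, and such a map is free to treat its $P$ input coordinates independently; so it suffices to exhibit a single scalar feedforward network $q:[0,1)\to\{0,\delta,\ldots,1-\delta\}$ with activations from $\Phi$ obeying $q(x)=k\delta$ for $x\in[k\delta,(k+1)\delta)$, and then apply $q$ entrywise. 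First I would use the closed form $q(x)=\delta\sum_{k=1}^{1/\delta-1}\mathds{1}[x\ge k\delta]$: if $x\in[m\delta,(m+1)\delta)$ then exactly the terms $k=1,\ldots,m$ contribute, giving $q(x)=m\delta$; and the half-open convention of the cells is matched precisely by taking the threshold $u\mapsto\mathds{1}[u\ge0]$ to be right-continuous.

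Next I would check admissibility and assemble the network. The threshold $u\mapsto\mathds{1}[u\ge0]$ is piecewise linear with two (constant) pieces, so it lies in $\Phi$; the ReLU $\sigma$ likewise lies in $\Phi$ (two pieces, one constant). Hence $q$ is computed by a $2$-layer $\Phi$-network: a hidden layer with the $1/\delta-1$ units $x\mapsto\mathds{1}[x-k\delta\ge0]$ and an output layer taking $\delta$ times their sum. If the feedforward block carries the residual form $\mathbf{Z}\mapsto\mathbf{Z}+{\sf FF}(\mathbf{Z})$ of \citep{Yun:2020}, I would instead have ${\sf FF}$ emit the correction $q(X_{i,j})-X_{i,j}=\delta\sum_{k}\mathds{1}[X_{i,j}-k\delta\ge0]-\sigma(X_{i,j})$, which is still a $2$-layer $\Phi$-network because $X_{i,j}=\sigma(X_{i,j})$ on $[0,1)$. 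Applying this block with shared weights to every column handles all $L_1$ tokens simultaneously; a single block of per-token hidden width $P(1/\delta-1)$ already works, and if one prefers the bounded-width feedforward layers of \citep{Yun:2020} the same computation factors through a stack of $O(PL_1/\delta)$ such layers, one per coordinate/threshold, yielding $g_{\mathrm q}:[0,1)^{P\times L_1}\to\mathbb{G}_\delta$.

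The one genuinely delicate point — the main obstacle — is that exact quantization on the \emph{half-open} cells $[k\delta,(k+1)\delta)$ is impossible with the continuous ramps one would ordinarily use to merely approximate a staircase: it forces honest (discontinuous) step activations, so one must verify both that such steps belong to $\Phi$ (they do, as two-piece piecewise-constant functions) and that their jump points sit with the correct one-sided continuity, i.e.\ $\mathds{1}[0\ge0]=1$, so that $q(k\delta)=k\delta$ rather than $(k-1)\delta$. Everything else is routine bookkeeping of the selection weights $\mathbf{W}_1$, the bias offsets $-k\delta$, and the readout weights $\delta$.
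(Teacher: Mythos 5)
Your construction is correct, but note that the paper itself does not prove this statement: it is imported verbatim as Lemma~5 of \citet{Yun:2020}, so the only proof to compare against is the one in that cited work. Your argument is essentially that standard one — exact entrywise staircase quantization realized by token-wise feedforward layers with activations from $\Phi$, with the half-open cells $[k\delta,(k+1)\delta)$ forcing genuinely discontinuous step activations, which $\Phi$ permits. The differences are only bookkeeping: \citet{Yun:2020} stack roughly $P/\delta$ narrow ($r=1$) token-wise layers, each using a three-piece activation to collapse one cell of one coordinate onto $k\delta$, whereas you use a single wide layer of two-piece step units $x\mapsto\delta\sum_{k\ge1}\mathds{1}[x-k\delta\ge0]$ per coordinate; also, since token-wise layers share weights across columns, the bounded-width count is $O(P/\delta)$ rather than your $O(PL_1/\delta)$, and your residual-form fix that mixes a step unit with a ReLU unit in one hidden layer deviates from a literal one-activation-per-layer reading of the architecture, though it is harmless (split the correction across two consecutive blocks, or use one three-piece activation, exactly as in the cited proof).
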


The following lemma ensures the existence of a modified Transformer that can implement a ``contextual mapping'': roughly speaking, it means each token of the Transformer output is a a unique Hash encoding of the whole input token sequence.
Below is a formal statement.
\begin{lemma}[\citep{Yun:2020} Lemma 6]
\label{fact:context}
Consider the following subset of $\mathbb{G}_\delta$:
\begin{align*}
    \widetilde{\mathbb{G}}_\delta:=\left\{\mathbf{X}\in\mathbb{G}_\delta \middle| \mathbf{X}_{:,i}\ne\mathbf{X}_{:,j}\textrm{ for all }i\ne j\right\}.
\end{align*}
If $L_1\ge2$ and $\delta\le1/2$, then there exists an attention network $g_{\mathrm{c}}:\mathbb{R}^{P \times L_1}\to \mathbb{R}^{P \times L_1}$ with the hardmax operator, a vector $\mathbf{u}\in\mathbb{R}^P$, constants $t_l,t_r$ with $0<t_l<t_r$, such that $\alpha(\mathbf{X}):=\mathbf{u}^\top g_{\mathrm{c}}(\mathbf{X})$ satisfies the following conditions:
\begin{enumerate}
    \item For any $\mathbf{X}\in\widetilde{\mathbb{G}}_\delta$, all entries of $\alpha(\mathbf{X})$ are different.
    
    \item For any $\mathbf{X},\mathbf{X}'\in\widetilde{\mathbb{G}}_\delta$ such that $\mathbf{X}'$ is not a permutation of $\mathbf{X}$, all entries of $\alpha(\mathbf{X})$, $\alpha(\mathbf{X}')$ are different.
    
    \item For any $\mathbf{X}\in\widetilde{\mathbb{G}}_\delta$, all entries of $\alpha(\mathbf{X})$ are in $[t_l,t_r]$.
    
    \item For any $\mathbf{X}\in\mathbb{G}_\delta\setminus\widetilde{\mathbb{G}}_\delta$, all entries of $\alpha(\mathbf{X})$ are outside $[t_l,t_r]$.
\end{enumerate}
\end{lemma}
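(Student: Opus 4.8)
The plan is to reconstruct the selective-shift argument from \citep{Yun:2020}. The first ingredient is a \emph{scalar encoding} of columns: choose $\mathbf{u}=(1,\delta^{-1},\delta^{-2},\dots,\delta^{-(P-1)})^\top$, so that for any $\mathbf{X}\in\mathbb{G}_\delta$ the number $\ell_k:=\mathbf{u}^\top\mathbf{X}_{:,k}$ is the base-$(1/\delta)$ expansion of the $k$-th column. Hence $k\mapsto\ell_k$ is injective on columns, each $\ell_k$ lies in the finite set $B:=\delta\{0,1,\dots,\delta^{-P}-1\}$, and $\mathbf{X}\in\widetilde{\mathbb{G}}_\delta$ precisely when the $\ell_k$ are pairwise distinct. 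Every attention block below is set up to read $\ell_k$ off column $k$ by taking its query/key/value projections proportional to $\mathbf{u}^\top$, and the readout $\alpha(\mathbf{X})=\mathbf{u}^\top g_{\mathrm{c}}(\mathbf{X})$ then tracks how the tuple $(\ell_1,\dots,\ell_{L_1})$ evolves through the blocks (the residual structure of an attention layer makes $\mathbf{u}^\top$ commute with the update, so this is consistent).

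Second, for each threshold $b\in B$, processed in increasing order, I would insert one hardmax attention block implementing a \emph{selective shift} $\Psi[\,\cdot\,;b]$: the position whose current scalar equals $b$ (of which there is at most one when $\mathbf{X}\in\widetilde{\mathbb{G}}_\delta$) attends to the position currently holding the largest scalar and has its own scalar increased by a large fixed multiple (a suitable power of $\delta^{-1}$) of that maximum, while every other position is left unchanged; the ``only the position equal to $b$ reacts'' behaviour is produced by a gating term in the attention logits, exactly as in \citep{Yun:2020}. Two things need checking here: that each such block is realizable by the modified-Transformer attention layer, and a bookkeeping lemma describing how the tuple of scalars, and in particular its running maximum, changes after the block.

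Third, I would prove the four conclusions for $\mathbf{X}\in\widetilde{\mathbb{G}}_\delta$ by induction over $B$: after all thresholds $\le b$ have been processed, the positions whose original scalar is $\le b$ have been moved to strictly larger, mutually well-separated values, each bounded by a fixed multiple of $\delta^{-P(L_1+1)}$ and recording, in a positional-number-system sense, its own original scalar together with the set of all original scalars, while positions with original scalar $>b$ still hold their original value. Taking the shift multiplier large enough makes the ``set'' part dominate, so at the end the $L_1$ entries of $\alpha(\mathbf{X})$ are pairwise distinct (Condition~1), and if $\mathbf{X}'$ is not a permutation of $\mathbf{X}$ then its recorded set differs, forcing the value ranges of $\alpha(\mathbf{X})$ and $\alpha(\mathbf{X}')$ to be disjoint (Condition~2). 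I then append one more hardmax attention block that reads the current global maximum $M$ and subtracts a fixed affine function of $M$ from every position, calibrated so that on $\widetilde{\mathbb{G}}_\delta$ all entries of $\alpha(\mathbf{X})$ land in a prescribed interval $[t_l,t_r]$ with $0<t_l<t_r$ (Condition~3). For $\mathbf{X}\in\mathbb{G}_\delta\setminus\widetilde{\mathbb{G}}_\delta$, some threshold $b$ equals the common value of two columns, so $\Psi[\,\cdot\,;b]$ moves two positions at once; this derails the growth of the running maximum so that, after the same normalization, every entry of $\alpha(\mathbf{X})$ falls strictly outside $[t_l,t_r]$ (Condition~4), and I would make this quantitative with the very same range bounds.

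The main obstacle is exactly this range bookkeeping: the encoder $\mathbf{u}$, the per-threshold shift magnitudes, and the final affine normalization must be chosen so that the nested intervals occupied by the evolving scalars are controlled tightly enough to guarantee, simultaneously, (a) distinct entries for each good input, (b) disjoint entry sets for non-permutation-equivalent good inputs, and (c) all good-input entries inside $[t_l,t_r]$ while every bad-input entry is outside it. Coordinating these constants is the delicate part (it is the technical heart of \citep[Lemma~6]{Yun:2020}), whereas once the constants are pinned down each individual estimate is an elementary geometric-series bound.
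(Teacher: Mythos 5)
The paper never proves this statement: it is imported verbatim from \citet{Yun:2020} (their Lemma~6) and used as a black box inside the proof of \Cref{fact:lite_univ_approx_full}, so the in-paper ``proof'' is the citation itself. Your outline is, in substance, a reconstruction of Yun et al.'s own argument rather than a new route: the base-$(1/\delta)$ scalar encoding of columns via $\mathbf{u}=(1,\delta^{-1},\dots,\delta^{-(P-1)})^\top$, a stack of hardmax attention layers implementing selective shifts indexed by the quantized column values processed in increasing order, a final global shift, and then range bookkeeping to extract the four conclusions. So the approach is the correct and standard one for this lemma.

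As a standalone proof, however, there is a genuine gap, which you yourself flag: the inductive invariant describing how the tuple $(\ell_1,\dots,\ell_{L_1})$ and its running maximum evolve, the choice of shift magnitudes that makes the ``recorded set'' dominate, the disjointness of the value ranges of $\alpha(\mathbf{X})$ and $\alpha(\mathbf{X}')$ for non-permutation pairs (Condition~2), and the quantitative claim that duplicated columns push every entry outside $[t_l,t_r]$ (Condition~4) are asserted, not derived --- and these estimates are the entire content of the lemma, not a routine afterthought. Two smaller inaccuracies in the sketch: the closing ``subtract an affine function of the maximum'' normalization is unnecessary and not what the standard construction does (the lemma only asserts existence of some $0<t_l<t_r$; Yun et al.\ instead add a large multiple of the maximum so that different sequences land in disjoint intervals), and ``the position whose current scalar equals $b$'' needs an accompanying argument that already-shifted positions can never re-enter a later threshold window (their values must be boosted beyond the largest threshold), otherwise the ``at most one position reacts'' claim fails. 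If the goal is to use the result, citing \citet{Yun:2020} as the paper does is sufficient; if the goal is to reprove it, the constant coordination you defer must actually be carried out.
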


For the document side, consider 
\begin{align*}
    \widetilde{\mathbb{H}}_\delta:=\left\{\mathbf{X}\in\mathbb{H}_\delta \middle| \mathbf{Y}_{:,i}\ne\mathbf{Y}_{:,j}\textrm{ for all }i\ne j\right\}.
\end{align*}
\Cref{fact:context} also ensures the existence of an attention network $h_{\mathrm{c}}:\mathbb{R}^{P \times L_2}\to \mathbb{R}^{P \times L_2}$ with the hardmax operator, a vector $\mathbf{v}\in\mathbb{R}^P$, constants $s_l,s_r$ with $0<s_l<s_r$, such that $\beta(\mathbf{Y}):=\mathbf{v}^\top h_{\mathrm{c}}(\mathbf{Y})$ satisfies similar conditions.
Also note that for small enough $\delta$, we can neglect $\mathbb{G}_\delta\setminus\widetilde{\mathbb{G}}_\delta$ and $\mathbb{H}_\delta\setminus\widetilde{\mathbb{H}}_\delta$, since 
$|\mathbb{G}_\delta\setminus\widetilde{\mathbb{G}}_\delta|=O\left(\delta^P|\mathbb{G}_\delta|\right)$ and $|\mathbb{H}_\delta\setminus\widetilde{\mathbb{H}}_\delta|=O\left(\delta^P|\mathbb{H}_\delta|\right)$.

Now we are ready to prove \Cref{fact:lite_univ_approx_full}.
We first consider the case without positional encodings.

\paragraph{Analysis without positional encodings.}

Note that for $\mathbf{X}\in\widetilde{\mathbb{G}}_\delta$ and $\mathbf{Y}\in\widetilde{\mathbb{H}}_\delta$, it holds that $\alpha(\mathbf{X})$ and $\beta(\mathbf{Y})$ already include enough information to determine the score. However, in LITE models, the final score is calculated only based on dot products between query embedding vectors and document embedding vectors. 
As a result, we need to first insert $\mathbf{u}$ and $\mathbf{v}$ into the Transformer embeddings.
The following lemma handles this issue: there exists a feedforward network such that for each $\mathbf{X}\in\widetilde{\mathbb{G}}_\delta$, it replaces one token in $g_{\mathrm{c}}(\mathbf{X})$ with $\mathbf{v}$ while keeping other tokens unchanged.

\begin{lemma}
\label{fact:embed_uv}
Consider the activation function $\varphi$ with $\varphi(z)=1$ if $0\le z\le1$, and $\varphi(z)=0$ if $z<0$ or $z>1$.
There exists a feedforward network $g_{\mathrm{v}}:\mathbb{R}^P \to \mathbb{R}^P$ with activation $\varphi$ such that for any $\mathbf{X}\in\widetilde{\mathbb{G}}_\delta$, let $i:=\argmin_j\alpha(\mathbf{X})_j$, then $g_{\mathrm{v}}(g_{\mathrm{c}}(\mathbf{X})_{:,i})=\mathbf{v}$, while for $j\ne i$, it holds that $g_{\mathrm{v}}(g_{\mathrm{c}}(\mathbf{X})_{:,j})=g_{\mathrm{c}}(\mathbf{X})_{:,j}$.
\end{lemma}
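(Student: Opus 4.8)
The plan is to build $g_{\mathrm{v}}$ as a single hidden layer with activation $\varphi$, exploiting the separation guarantees of \Cref{fact:context}. The key observation is that, by property 1 of that lemma, for any $\mathbf{X}\in\widetilde{\mathbb{G}}_\delta$ the entries of $\alpha(\mathbf{X})=\mathbf{u}^\top g_{\mathrm{c}}(\mathbf{X})$ are all distinct, so $i:=\argmin_j\alpha(\mathbf{X})_j$ is well-defined, and by property 3 these entries lie in the fixed interval $[t_l,t_r]$. Moreover, taking $\delta$ slightly smaller if necessary, there is a finite, uniform gap $\gamma>0$ between the minimum entry $\alpha(\mathbf{X})_i$ and the second-smallest entry, valid simultaneously over all $\mathbf{X}\in\widetilde{\mathbb{G}}_\delta$ (finitely many points). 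So the single scalar $\mathbf{u}^\top\mathbf{z}$, evaluated at a column $\mathbf{z}=g_{\mathrm{c}}(\mathbf{X})_{:,j}$, already tells us whether $j=i$ (value in $[\alpha(\mathbf{X})_i,\alpha(\mathbf{X})_i+\text{tiny}]$, i.e.\ the unique smallest) or $j\ne i$ (value at least $\gamma$ larger). Actually we need a threshold $\tau$ that is \emph{global}: since the minimum entry $\alpha(\mathbf{X})_i$ itself varies with $\mathbf{X}$, we instead use the hardmax-based construction of \Cref{fact:context} to route on a per-input basis — but a cleaner route is available, described next.

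The main idea is to realize $g_{\mathrm{v}}$ as an affine combination of a ``gate'' built from $\varphi$ and the identity, so that on the selected token it outputs $\mathbf{v}$ and elsewhere it outputs the input unchanged. Concretely, I would arrange for a scalar gate $b(\mathbf{z})\in\{0,1\}$ that equals $1$ exactly when $\mathbf{z}$ is the minimizing column and $0$ on every other column appearing in some $g_{\mathrm{c}}(\mathbf{X})$, $\mathbf{X}\in\widetilde{\mathbb{G}}_\delta$, and then set
\begin{equation*}
    g_{\mathrm{v}}(\mathbf{z}) = \mathbf{z} + b(\mathbf{z})\,(\mathbf{v}-\mathbf{z}).
\end{equation*}
The difficulty is that $\mathbf{z}\mapsto\mathbf{z}$ is not expressible with the clipped activation $\varphi$ alone; however, the statement only requires a feedforward network \emph{with activation} $\varphi$, and one is free to include skip/identity paths (as the modified-Transformer framework of \citep{Yun:2020} does throughout), or equivalently to note that $\varphi$ can be combined with linear maps to produce a piecewise-linear identity on the bounded region where all relevant columns live (all $g_{\mathrm{c}}(\mathbf{X})$ columns lie in a compact box, since $\widetilde{\mathbb{G}}_\delta$ is finite). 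To construct $b$: by properties 1–2 of \Cref{fact:context}, the scalar values $\{\,\mathbf{u}^\top g_{\mathrm{c}}(\mathbf{X})_{:,j} : \mathbf{X}\in\widetilde{\mathbb{G}}_\delta,\ j\in[L_1]\,\}$ form a finite set in which, for each $\mathbf{X}$, the value at $j=i$ is the unique smallest. But because this smallest value still depends on $\mathbf{X}$, a single scalar threshold does not suffice; so I would instead first append (inside $g_{\mathrm{c}}$, or in a preliminary feedforward layer) a coordinate recording a per-input offset, OR — simplest — observe that \Cref{fact:context}'s network can be augmented so that the minimizing token is mapped to a \emph{fixed} value $c_\star$ outside $[t_l,t_r]$ while all others stay in $[t_l,t_r]$, after which $b(\mathbf{z}) := \varphi\!\big(\rho(\mathbf{u}^\top\mathbf{z} - c_\star) + 1\big)$ with large $\rho$ is a clean indicator. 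This is the one place that needs care.

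I expect the main obstacle to be exactly this \emph{globalization} of the selection threshold: the quantity $\mathbf{u}^\top g_{\mathrm{c}}(\mathbf{X})_{:,i}$ is input-dependent, so detecting ``this is the argmin token'' with a feedforward network that cannot itself take a max across tokens requires either (i) piggybacking on the hardmax layer already inside $g_{\mathrm{c}}$ to push the argmin token's $\mathbf{u}$-projection to a canonical location, or (ii) enlarging $g_{\mathrm{c}}$'s output by one bookkeeping coordinate equal to $\min_j\alpha(\mathbf{X})_j$ (computable by one hardmax + one linear map) and then gating on the difference, which \emph{is} a function of a single column once that coordinate is carried along. Either way the remaining steps — writing the $\{0,1\}$-valued gate as a bounded number of $\varphi$ units, and assembling $g_{\mathrm{v}}(\mathbf{z})=\mathbf{z}+b(\mathbf{z})(\mathbf{v}-\mathbf{z})$ — are routine piecewise-linear bookkeeping over a finite point set, and the continuity/compactness hypotheses guarantee all the needed uniform gaps are strictly positive. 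I would present option (ii) as the cleanest and defer the detailed layer counts, since they add nothing conceptual.
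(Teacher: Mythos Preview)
Your proposal has a real gap: you correctly identify the ``globalization'' difficulty, but then propose to resolve it by modifying $g_{\mathrm{c}}$ (either pushing the argmin token to a canonical value, or appending a bookkeeping coordinate carrying $\min_j\alpha(\mathbf{X})_j$). Both options step outside the lemma as stated, which asks for a feedforward $g_{\mathrm{v}}:\mathbb{R}^P\to\mathbb{R}^P$ acting column-by-column on the \emph{given} $g_{\mathrm{c}}$ from \Cref{fact:context}. Your form $g_{\mathrm{v}}(\mathbf{z})=\mathbf{z}+b(\mathbf{z})(\mathbf{v}-\mathbf{z})$ also hides a multiplicative term $b(\mathbf{z})\cdot\mathbf{z}$ that is not a standard feedforward operation, and your sketch of how to recover it is where the argument would actually live.

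What you are missing is the full force of property~2 in \Cref{fact:context}: for any two $\mathbf{X},\mathbf{X}'\in\widetilde{\mathbb{G}}_\delta$ that are not permutations of one another, \emph{all} entries of $\alpha(\mathbf{X})$ and $\alpha(\mathbf{X}')$ are distinct. Combined with property~1, this means the scalar $\mathbf{u}^\top\mathbf{z}$ is a \emph{global} identifier over the finite set $\{g_{\mathrm{c}}(\mathbf{X})_{:,j}:\mathbf{X}\in\widetilde{\mathbb{G}}_\delta,\ j\in[L_1]\}$: its value alone pins down (up to permutation, which is harmless by equivariance) both which $\mathbf{X}$ and which column $j$ you are looking at, and in particular whether $j$ is the argmin. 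So no relative threshold is needed at all. The paper simply builds a lookup table: for each pair $(\mathbf{X},i)$ it places one $\varphi$-neuron whose window $[l(\mathbf{X},i),r(\mathbf{X},i)]$ isolates the value $\alpha(\mathbf{X})_i$, and whose output weight is the \emph{constant} vector $\mathbf{v}$ if $i$ is the argmin for $\mathbf{X}$, and the constant vector $g_{\mathrm{c}}(\mathbf{X})_{:,i}$ otherwise; then it sums these neurons. Because each bucket outputs a fixed vector, there is no multiplicative $b(\mathbf{z})\cdot\mathbf{z}$ term and no need to realize the identity map with $\varphi$. Once you use property~2 this way, the construction is a one-liner and your options (i)/(ii) become unnecessary.
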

\begin{proof}
For any $\mathbf{X}\in\widetilde{\mathbb{G}}_\delta$ and any $i$, $1\le i\le L_1$, \Cref{fact:context} ensures that there exists constants $l(\mathbf{X},i)$ and $r(\mathbf{X},i)$ such that $0<l(\mathbf{X},i)<\alpha(\mathbf{X})_i<r(\mathbf{X},i)$, and that $[l(\mathbf{X},i),r(\mathbf{X},i)]$ does not contain other entries in $\alpha(\mathbf{X})$, and moreover $[l(\mathbf{X},i),r(\mathbf{X},i)]$ does not contain entries from $\alpha(\mathbf{X}')$ for $\mathbf{X}'\in\widetilde{\mathbb{G}}_\delta$ which is not a permutation of $\mathbf{X}$.
For this $(\mathbf{X},i)$ pair, if $i:=\argmin_j\alpha(\mathbf{X})_j$, we construct the following neuron
\begin{align*}
    \psi_{\mathbf{X},i}(\mathbf{z}):=\varphi\left(\frac{1}{r(\mathbf{X},i)-l(\mathbf{X},i)}\left(\mathbf{u}^\top\mathbf{z}-l(\mathbf{X},i)\right)\right)\mathbf{v},
\end{align*}
otherwise let
\begin{align*}
    \psi_{\mathbf{X},i}(\mathbf{z}):=\varphi\left(\frac{1}{r(\mathbf{X},i)-l(\mathbf{X},i)}\left(\mathbf{u}^\top\mathbf{z}-l(\mathbf{X},i)\right)\right)g_{\mathrm{c}}(\mathbf{X})_{:,i}.
\end{align*}
The full network is the sum of all such neurons
\begin{align*}
    g_{\mathrm{v}}(\mathbf{z}):=\sum_{\mathbf{X}\in\widetilde{\mathbb{G}}_\delta,1\le i\le L_1}\psi_{\mathbf{X},i}(\mathbf{z}),
\end{align*}
which satisfies the requirement of \Cref{fact:embed_uv}.
\end{proof}

\Cref{fact:embed_uv} is stated for the query side; on the document side, it also follows that there exists a feedforward network $h_{\mathrm{u}}$ that can replace one token in the embeddings given by $h_{\mathrm{c}}$ by $\mathbf{u}$.
Then we are ready to prove \Cref{fact:lite_univ_approx_full} without positional encodings.
\begin{proof}[Proof of \Cref{fact:lite_univ_approx_full}, no positional encodings]
In this proof, we will focus on $\mathbf{X}\in\widetilde{\mathbb{G}}_\delta$ and $\mathbf{Y}\in\widetilde{\mathbb{H}}_\delta$ as ensured by \Cref{fact:quantization,fact:context}.
We also use notation introduced in \Cref{fact:context,fact:embed_uv}.

First consider $\mathbf{u}$ and $\mathbf{v}$ given by \Cref{fact:context}.
Without loss of generality, we can assume $\mathbf{u}^\top\mathbf{v}\le0$; if $\mathbf{u}^\top\mathbf{v}>0$, we will replace $\mathbf{v}$ with $-\mathbf{v}$ and replace $h_{\rm c}(\mathbf{Y})$ with $-h_{\rm c}(\mathbf{Y})$, which ensures $\mathbf{u}^\top\mathbf{v}\le0$, and moreover the conclusions of \Cref{fact:context} still hold.
In detail, in the construction of $g_{\rm v}$, we use $-\mathbf{v}$ instead of $\mathbf{v}$, while in the construction of $h_{\rm u}$, we use $-h_{\rm c}(\mathbf{Y})$ instead of $h_{\rm c}(\mathbf{Y})$.
As a result, in the following we assume $\mathbf{u}^\top\mathbf{v}\le0$.

Recall that for $\mathbf{X}\in\widetilde{\mathbb{G}}_\delta$, the range of $\mathbf{u}^\top g_{\rm c}(\mathbf{X})$ is denoted by $[t_l,t_r]$ with $0<t_l<t_r$, while for $\mathbf{Y}\in\widetilde{\mathbb{H}}_\delta$, the range of $\mathbf{v}^\top h_{\rm c}(\mathbf{Y})$ is denoted by $[s_l,s_r]$ with $0<s_l<s_r$.
Define
\begin{align*}
    M:=\max_{\mathbf{X}\in\widetilde{\mathbb{G}}_\delta}\max_{\mathbf{Y}\in\widetilde{\mathbb{H}}_\delta}\max_{i,j}\left|g_{\rm c}(\mathbf{X})_{:,i}^\top h_{\rm c}(\mathbf{Y})_{:,j}\right|.
\end{align*}
In the following, we will assume $t_l>M$ and $s_l>t_r$ without loss of generality; if these conditions do not hold, we can let $\lambda_1,\lambda_2$ be large enough such that $\lambda_1t_l>M$ and $\lambda_2s_l>\lambda_1t_r$, and scale $\mathbf{u}$ to $\lambda_1\mathbf{u}$, and scale $\mathbf{v}$ to $\lambda_2\mathbf{v}$.

Given $\mathbf{X}\in\widetilde{\mathbb{G}}_\delta$ and $\mathbf{Y}\in\widetilde{\mathbb{H}}_\delta$, we consider $\mathbf{Q}=g_{\rm v}(g_{\rm c}(\mathbf{X}))\in\R^{P \times L_1}$, and $\mathbf{D}=h_{\rm u}(h_{\rm c}(\mathbf{Y}))\in\R^{P \times L_2}$, and the dot-product matrix $\mathbf{S}:=\mathbf{Q}^\top\mathbf{D}\in\R^{L_1\times L_2}$.
\Cref{fact:embed_uv} ensures that $\mathbf{Q}$ has one column equal to $\mathbf{v}$, while $\mathbf{D}$ has one column equal to $\mathbf{u}$.

Let $\mathbf{q}$ denote an arbitrary column of $\mathbf{Q}$ other than $\mathbf{v}$, and let $\mathbf{d}$ denote an arbitrary column of $\mathbf{D}$ other than $\mathbf{u}$.
Due to previous discussion, we have $\mathbf{v}^\top\mathbf{d}\ge s_l>t_r\ge\mathbf{q}^\top\mathbf{u}$, and therefore we can distinguish them.
Additionally $\mathbf{q}^\top\mathbf{u}\ge t_l>M$, and thus we can distinguish it from other entries of $\mathbf{S}$, including $\mathbf{v}^\top\mathbf{u}\le0$.

Now let us examine $\mathbf{S}$ in detail.
Suppose $\mathbf{Q}_{:,i}=\mathbf{v}$ and $\mathbf{D}_{:,j}=\mathbf{u}$ for some $1\le i\le L_1$ and $1\le j\le L_2$.
Then 
\begin{align*}
    \mathbf{S}_{i,:}=(\mathbf{Q}^\top\mathbf{D})_{i,:}=[\mathbf{v}^\top\mathbf{d}_1,\cdots,\mathbf{v}^\top\mathbf{u},\cdots,\mathbf{v}^\top\mathbf{d}_{L_2}],
\end{align*}
and
\begin{align*}
    \mathbf{S}_{:,j}=[\mathbf{q}_1^\top\mathbf{u},\cdots,\mathbf{v}^\top\mathbf{u},\cdots,\mathbf{q}_{L_1}^\top\mathbf{u}]^\top.
\end{align*}
The previous scaling allows us to find $\mathbf{S}_{i,:}$ and $\mathbf{S}_{:,j}$.
\Cref{fact:context} ensures that every element of $\mathbf{S}_{i,:}$ other than $\mathbf{v}^\top\mathbf{u}$ can uniquely determine the set of columns of the document input $\mathbf{Y}$, but not the order of columns since Transformers without positional encodings are permutation-equivariant \citep[Claim 1]{Yun:2020}.
However, all elements of $\mathbf{S}_{i,:}$ together are able to determine the exact order of columns of $\mathbf{Y}$.
Similarly, $\mathbf{S}_{:,j}$ as a whole can determine the exact query input $\mathbf{X}$, including the order of columns.
Consequently, $\mathbf{S}$ can uniquely determine the input pair $(\mathbf{X},\mathbf{Y})$, and also the ground-truth score $s_\delta(\mathbf{X},\mathbf{Y})$.

For flattened LITE, note that $\widetilde{\mathbb{G}}_\delta$ and $\widetilde{\mathbb{H}}_\delta$ are both finite, and thus the set of possible dot-product matrix 
\begin{equation*}
    \left\{\mathbf{Q}^\top\mathbf{D}\middle|\mathbf{Q}=g_{\rm v}(g_{\rm c}(\mathbf{X})),\mathbf{D}=h_{\rm u}(h_{\rm c}(\mathbf{Y})),\mathbf{X}\in\widetilde{\mathbb{G}}_\delta,\mathbf{Y}\in\widetilde{\mathbb{H}}_\delta\right\}
\end{equation*}
is also finite. 
Moreover, each dot-product matrix uniquely determines the ground-truth score, as discussed above.
Therefore there exists a 2-layer ReLU network that uniformly approximates an interpolations of these scores \citep{cybenko1989approximation,funahashi1989approximate,hornik1989multilayer}, which finishes the proof.
    
For separable LITE, recall that we first apply a shared MLP $f_1$ to reduce every row of $\mathbf{S}$ to a scalar, and thus get a column vector; then we apply another MLP $f_2$ to reduce this column vector to a final score.
Now let $\psi$ denote an injection from $\widetilde{\mathbb{H}}_\delta$ to $[t_r+1,t_r+2]$, i.e., for any $\mathbf{Y},\mathbf{Y}'\in\widetilde{\mathbb{H}}_\delta$, we have $\psi(\mathbf{Y}),\psi(\mathbf{Y}')\in[t_r+1,t_r+2]$, and $\psi(\mathbf{Y})\ne\psi(\mathbf{Y}')$.
There exists such a $\psi$ since $\widetilde{\mathbb{H}}_\delta$ is finite.

Now if the $i$-th column of $\mathbf{Q}$ is $\mathbf{v}$, then we let $f_1$ map $\mathbf{S}_{i,:}$ to $\psi(\mathbf{Y})$; this is well-defined since $\mathbf{S}_{i,:}$ uniquely determines $\mathbf{Y}$, as discussed above.
For any $i'\ne i$, we let $f_1$ map $\mathbf{S}_{i',:}$ to $\mathbf{q}_{i'}^\top\mathbf{u}\in[t_l,t_r]$.
Note that by our construction, $f_1(\mathbf{S}_{i,:})\ge t_r+1>t_r\ge f_1(\mathbf{S}_{i',:})$.
As a result, $f_1(\mathbf{S})$ can uniquely determines $(\mathbf{X},\mathbf{Y})$, and thus there exists another MLP $f_2$ which can approximate the ground-truth score $s_\delta$.
\end{proof}

\paragraph{Analysis with positional encodings.}

Here we consider the case with positional encodings.
Following \citep{Yun:2020}, we will use fixed positional encodings: let $\mathbf{1}$ denote the $P$-dimensional all-ones vector, and let $\mathbf{E}\in\R^{P\times L_1}$ denote the matrix whose $j$-th column is given by $(j-1)\mathbf{1}$, and similarly let $\mathbf{F}\in\R^{P\times L_2}$ denote the matrix whose $j$-th column is given by $(j-1)\mathbf{1}$.
Given input $\mathbf{X}\in[0,1)^{P\times L_1}$ and $\mathbf{Y}\in[0,1)^{P\times L_2}$, we transform them to $(\mathbf{X}+\mathbf{E})/L_1$ and $(\mathbf{Y}+\mathbf{F})/L_2$.
Note that after the transformation, it holds that $(\mathbf{X}+\mathbf{E})/L_1\in\prod_{i=1}^P\prod_{j=1}^{L_1}[(j-1)/L_1,j/L_1)$; in other words, different columns of $(\mathbf{X}+\mathbf{E})/L_1$ have different ranges.

We can now invoke our earlier analysis.
Let $\delta=1/(nL_1L_2)$ for some large enough integer $n$ such that the approximation error in \eqref{eq:s_delta} is small enough.
Then \Cref{fact:quantization} implies there exist feedforward networks $g_{\mathrm{q}}$ and $h_{\mathrm{q}}$ that can quantize the input domains to $\mathbb{G}_\delta=\{0,\delta,\cdots,1-\delta\}^{P\times L_1}$ and  $\mathbb{H}_\delta=\{0,\delta,\cdots,1-\delta\}^{P\times L_2}$.
Combined with the positional encodings, we only need to consider the following input domains:
\begin{align*}
    \mathbb{G}_{\delta,\mathrm{pe}} & :=\left\{g_{\mathrm{q}}((\mathbf{X}+\mathbf{E})/L_1)\middle|\mathbf{X}\in[0,1)^{P\times L_1}\right\},\\
    \mathbb{H}_{\delta,\mathrm{pe}} & :=\left\{h_{\mathrm{q}}((\mathbf{Y}+\mathbf{F})/L_2)\middle|\mathbf{Y}\in[0,1)^{P\times L_2}\right\}.
\end{align*}
Note that for any $\mathbf{X}\in\mathbb{G}_{\delta,\mathrm{pe}}$, all of its columns are different, and for any different $\mathbf{X},\mathbf{X}'\in\mathbb{G}_{\delta,\mathrm{pe}}$, it holds that the columns of $\mathbf{X}$ are not a permutation of the columns of $\mathbf{X}'$.

Then we can invoke \Cref{fact:context}, which shows the existence of an attention network $g_{\mathrm{c}}$ and a vector $\mathbf{u}$ such that for any $\mathbf{X}\in\mathbb{G}_{\delta,\mathrm{pe}}$, it holds that any entry of $\mathbf{u}^\top g_{\mathrm{c}}(\mathbf{X})$ uniquely determines $\mathbf{X}$.
Similarly, there exists $h_{\mathrm{c}}$ and $\mathbf{v}$ which implement contextual mapping for documents.
Now we just need the following pooling functions: for the query side, the pooling function outputs $\mathbf{v}$ and $g_{\mathrm{c}}(\mathbf{X})_{:,1}$; for the document side, the pooling function outputs $\mathbf{u}$ and $h_{\mathrm{c}}(\mathbf{Y})_{:,1}$.
The similarity matrix is then given by
\begin{equation*}
    \begin{bmatrix}
        \mathbf{u}^\top\mathbf{v} & \mathbf{v}^\top h_{\mathrm{c}}(\mathbf{Y})_{:,1} \\
        \mathbf{u}^\top g_{\mathrm{c}}(\mathbf{X})_{:,1} & g_{\mathrm{c}}(\mathbf{X})_{:,1}^\top h_{\mathrm{c}}(\mathbf{Y})_{:,1}
    \end{bmatrix}
\end{equation*}
In particular, the off-diagonal entries of the similarity matrix are enough to determine the query-document pair.
Therefore we can further use MLP scorers to approximate the ground-truth scoring function.

\section{Proof of \Cref{fact:de_neg}}
\label{sec:de_neg}

To prove \Cref{fact:de_neg}, we first construct an empirical dataset on which we show a simple dot-product dual encoder has a large approximation error based on a rank argument.
This empirical dataset can then be extended to a distribution on $[0,1]^{P\times L}$.

Here we let $L_1=L_2=L$, i.e., all queries and documents have the same number of tokens.
The set of queries is simply $\mathcal{Q}:=\{0,1\}^{P\times L}$, i.e., there are $2^{PL}$ queries, each of them has dimension $P\times L$, and each coordinate of them can be either $0$ or $1$.
The set of documents is also given by $\mathcal{D}:=\{0,1\}^{P\times L}$.
Given a query $\mathbf{X}\in\mathcal{Q}$ and a document $\mathbf{Y}\in\mathcal{D}$, define the ground-truth score as
\begin{equation}\label{eq:K^*_emp}
    K^*(\mathbf{X},\mathbf{Y}):=\mathrm{tr}(\mathbf{X}^\top\mathbf{Y})
\end{equation}
Let $\mathbf{K}^*\in\R^{2^{PL} \times 2^{PL}}$ denote the matrix of ground-truth scores between all query-document pairs.
We will show the following result.
\begin{lemma}
\label{fact:de_neg_emp}
Let $T_1:\R^{P\times L}\to\R^O$ denote an arbitrary function that maps a query $\mathbf{X}\in\mathcal{Q}$ to an $O$-dimensional vector, and let $T_2:\R^{P\times L}\to\R^O$ denote an arbitrary function that maps a document $\mathbf{Y}\in\mathcal{D}$ to an $O$-dimensional vector.
Given $\mathbf{X}\in\mathcal{Q}$ and $\mathbf{Y}\in\mathcal{D}$, define the dot-product DE score as $K^{\rm de}(\mathbf{X},\mathbf{Y})=T_1(\mathbf{X})^\top T_2(\mathbf{Y})$, and let $\mathbf{K}^{\rm de}\in\R^{2^{PL}\times 2^{PL}}$ denote the matrix of DE scores for all query-document pairs.
If $O\le PL-1$, then the mean square error between $\mathbf{K}^*$ and $\mathbf{K}^{\rm de}$ is at least $1/16$:
\begin{equation*}
   \frac{1}{2^{2PL}} \|\mathbf{K}^*-\mathbf{K}^{\rm de}\|_F^2\ge\frac{1}{16}.
\end{equation*}
\end{lemma}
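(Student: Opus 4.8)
The plan is to lower bound the approximation error by a \emph{rank argument}: the dot‑product DE score matrix $\mathbf{K}^{\rm de}$ has small rank, so it cannot be Frobenius‑close to $\mathbf{K}^*$ unless $\mathbf{K}^*$ has small \emph{tail} singular values, and we show it does not. First I would flatten the token matrices, identifying $\{0,1\}^{P\times L}$ with $\{0,1\}^{N}$ where $N:=PL$, so that $K^*(\mathbf{X},\mathbf{Y})=\langle\mathbf{x},\mathbf{y}\rangle$ for the corresponding binary vectors $\mathbf{x}=(x_1,\dots,x_N)$ and $\mathbf{y}$. Then $\mathbf{K}^*=\mathbf{V}\mathbf{V}^\top$, where $\mathbf{V}\in\R^{2^N\times N}$ has entries $\mathbf{V}_{\mathbf{x},k}=x_k$; in particular $\mathbf{K}^*$ is symmetric and positive semidefinite with rank at most $N$, and its nonzero eigenvalues are exactly those of the $N\times N$ Gram matrix $\mathbf{G}:=\mathbf{V}^\top\mathbf{V}$.

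Next I would compute $\mathbf{G}$ by a direct count over $\{0,1\}^N$: $\mathbf{G}_{kk}=\#\{\mathbf{x}:x_k=1\}=2^{N-1}$ and, for $k\ne l$, $\mathbf{G}_{kl}=\#\{\mathbf{x}:x_k=x_l=1\}=2^{N-2}$, i.e.\ $\mathbf{G}=2^{N-2}(\mathbf{I}_N+\mathbf{1}\mathbf{1}^\top)$. Since $\mathbf{I}_N+\mathbf{1}\mathbf{1}^\top$ has eigenvalue $N+1$ with multiplicity $1$ and eigenvalue $1$ with multiplicity $N-1$, it follows that $\mathbf{K}^*$ has singular values $\sigma_1(\mathbf{K}^*)=2^{N-2}(N+1)$ and $\sigma_2(\mathbf{K}^*)=\dots=\sigma_N(\mathbf{K}^*)=2^{N-2}$, with all remaining singular values equal to $0$.

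Finally, writing $\mathbf{K}^{\rm de}=\mathbf{A}\mathbf{B}^\top$ for matrices $\mathbf{A},\mathbf{B}\in\R^{2^N\times O}$ whose rows are $T_1(\mathbf{X})^\top$ and $T_2(\mathbf{Y})^\top$ respectively, we get $\mathrm{rank}(\mathbf{K}^{\rm de})\le O\le N-1$. By the Eckart--Young--Mirsky theorem, every matrix of rank at most $O$ is at squared Frobenius distance at least $\sum_{i>O}\sigma_i(\mathbf{K}^*)^2\ge\sigma_N(\mathbf{K}^*)^2=2^{2N-4}$ from $\mathbf{K}^*$, where the last inequality uses $O+1\le N$. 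Dividing by $2^{2PL}=2^{2N}$ gives $\tfrac{1}{2^{2PL}}\|\mathbf{K}^*-\mathbf{K}^{\rm de}\|_F^2\ge 2^{-4}=\tfrac{1}{16}$, as claimed.

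There is no deep obstacle: the two substantive steps are (i) spotting the factorization $\mathbf{K}^*=\mathbf{V}\mathbf{V}^\top$ and reducing the error bound to the spectrum of $\mathbf{G}$ via Eckart--Young, and (ii) the short eigenvalue computation for $\mathbf{I}_N+\mathbf{1}\mathbf{1}^\top$. It is worth sanity‑checking that the hypothesis $O\le PL-1$ is exactly what is needed: with $O=PL$ one may take $T_1,T_2$ to be the flattening maps, giving $\mathbf{K}^{\rm de}=\mathbf{K}^*$ and zero error. (The mild extra loss incurred when passing from this finite construction to a continuous $s$ on $[0,1]^{P\times L}\times[0,1]^{P\times L}$ is what accounts for the slightly weaker constant $1/20$ in \Cref{fact:de_neg}.)
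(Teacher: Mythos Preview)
Your proposal is correct and follows essentially the same route as the paper: factor $\mathbf{K}^*=\mathbf{V}\mathbf{V}^\top$, compute the spectrum of the $N\times N$ Gram matrix $\mathbf{V}^\top\mathbf{V}=2^{N-2}(\mathbf{I}_N+\mathbf{1}\mathbf{1}^\top)$ by the standard rank-one-update eigenvalue calculation, and then invoke Eckart--Young--Mirsky to lower bound the Frobenius error of any rank-$O$ approximation by the smallest nonzero singular value squared. The paper's proof is organized into the same three steps (its \Cref{fact:eig} is your eigenvalue computation for $\mathbf{I}+\mathbf{1}\mathbf{1}^\top$, and its \Cref{fact:K^*} is your spectrum claim for $\mathbf{K}^*$), with Eckart--Young applied implicitly at the end.
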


To prove \Cref{fact:de_neg_emp}, we first show the following linear algebra fact.
\begin{proposition}
\label{fact:eig}
    Let $I_n$ denote the $n$-by-$n$ diagonal matrix, and let $J_n$ denote the $n$-by-$n$ matrix whose entries are all $1$. 
    For $\lambda>0$, the matrix $\lambda I_n+J_n$ has rank $n$; its top eigenvalue is $\lambda+n$, while the remaining $n-1$ eigenvalues are $\lambda$.
\end{proposition}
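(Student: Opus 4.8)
The plan is to diagonalize this rank-one perturbation of a multiple of the identity explicitly. First I would write $J_n = \1\1^\top$, where $\1$ is the $n$-dimensional all-ones column vector, exhibiting $J_n$ as a symmetric matrix of rank one. Its spectrum is then immediate: $J_n\1 = \1(\1^\top\1) = n\1$, so $n$ is an eigenvalue with eigenvector $\1$; and every $\mathbf{w}$ with $\1^\top\mathbf{w}=0$ satisfies $J_n\mathbf{w} = \1(\1^\top\mathbf{w}) = 0$, so $0$ is an eigenvalue whose eigenspace is the $(n-1)$-dimensional hyperplane $\1^\perp$. Since $\R^n = \mathrm{span}(\1)\oplus\1^\perp$, this accounts for all $n$ eigenvalues of $J_n$ (with an orthogonal eigenbasis).

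Next I would note that $I_n$ fixes every vector, so the eigenvectors above are also eigenvectors of $\lambda I_n + J_n$: on $\mathrm{span}(\1)$ the matrix acts as multiplication by $\lambda + n$, and on $\1^\perp$ it acts as multiplication by $\lambda$. Hence the eigenvalues of $\lambda I_n + J_n$ are $\lambda + n$ with multiplicity one and $\lambda$ with multiplicity $n-1$, and since $\lambda > 0$ we have $\lambda + n > \lambda > 0$, so $\lambda + n$ is the top eigenvalue. Finally, all $n$ eigenvalues being strictly positive, the matrix has trivial kernel and therefore rank $n$.

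There is essentially no obstacle here; the only place the hypothesis $\lambda > 0$ (rather than merely $\lambda \ge 0$) is used is in the full-rank conclusion, since at $\lambda = 0$ the matrix $J_n$ has rank one. As an alternative one could invoke the matrix determinant lemma to get $\det(\lambda I_n + \1\1^\top) = \lambda^{n-1}(\lambda + n)$ and read off the rank from the determinant, but the eigenvector computation already delivers both the determinant and the rank simultaneously, so I would not pursue that route.
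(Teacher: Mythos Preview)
Your proposal is correct and follows essentially the same argument as the paper: find the eigenvectors and eigenvalues of $J_n$ directly (the all-ones vector with eigenvalue $n$, and its orthogonal complement with eigenvalue $0$), then shift by $\lambda$ to obtain the spectrum of $\lambda I_n + J_n$. Your write-up is slightly more explicit in noting $J_n = \1\1^\top$ and in deducing full rank from the positivity of all eigenvalues, but the underlying idea is identical.
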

\begin{proof}
First consider the matrix $J_n$.
Let $\mathbf{1}_n$ denote the $n$-dimensional vector whose entries are all $1$; it is an eigenvector of $J_n$ with eigenvalue $n$.
Moreover, $J_n$ also has eigenvalue $0$; the corresponding eigenspace is given by $\{\mathbf{z}\in\R^n|\sum_iz_i=0\}$, which has dimension $n-1$.
As a result, the eigenvalue $0$ has multiplicity $n-1$.

Moreover, note that for any $n$-by-$n$ matrix $\mathbf{A}$ with eigenvalue $\mu$, the matrix $\lambda I_n+\mathbf{A}$ has an eigenvalue $\lambda+\mu$.
Consequently, the matrix $\lambda I_n+J_n$ has eigenvalue $\lambda+n$ with multiplicity $1$, and eigenvalue $\lambda$ with multiplicity $n-1$.
\end{proof}

Next we prove the following properties of $\mathbf{K}^*$ using \Cref{fact:eig}.
\begin{lemma}
\label{fact:K^*}
    It holds that $\mathbf{K}^*$ has rank $PL$; its top eigenvalue is $2^{PL-2}(PL+1)$, while the remaining $PL-1$ eigenvalues are $2^{PL-2}$.
\end{lemma}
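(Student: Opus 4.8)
The plan is to realise $\mathbf{K}^*$ as a Gram matrix and then read off its spectrum via \Cref{fact:eig}. Identify each $\mathbf{X}\in\{0,1\}^{P\times L}$ with its flattening ${\sf vec}(\mathbf{X})\in\{0,1\}^{PL}$, so that $K^*(\mathbf{X},\mathbf{Y})=\mathrm{tr}(\mathbf{X}^\top\mathbf{Y})=\langle{\sf vec}(\mathbf{X}),{\sf vec}(\mathbf{Y})\rangle$. Let $M\in\R^{2^{PL}\times PL}$ be the matrix whose rows are the $2^{PL}$ vectors of $\{0,1\}^{PL}$, ordered so as to match the indexing of $\mathbf{K}^*$. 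Then $\mathbf{K}^*=MM^\top$ by definition, so $\mathbf{K}^*$ is symmetric positive semidefinite with $\mathrm{rank}(\mathbf{K}^*)=\mathrm{rank}(M)$.

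First I would compute $\mathrm{rank}(M)$: the rows of $M$ include the $PL$ standard basis vectors $e_1,\dots,e_{PL}$, so $\mathrm{rank}(M)=PL$, and hence $\mathbf{K}^*$ has exactly $PL$ nonzero eigenvalues. Next I would pass to the smaller Gram matrix $M^\top M\in\R^{PL\times PL}$, which has the same nonzero eigenvalues (with multiplicities) as $MM^\top=\mathbf{K}^*$. A direct count gives $(M^\top M)_{ij}=\sum_{x\in\{0,1\}^{PL}}x_ix_j$, which equals $2^{PL-1}$ when $i=j$ (the number of binary vectors with bit $i$ set) and $2^{PL-2}$ when $i\ne j$ (the number of binary vectors with bits $i$ and $j$ both set). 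Therefore $M^\top M=2^{PL-2}I_{PL}+2^{PL-2}J_{PL}=2^{PL-2}(I_{PL}+J_{PL})$.

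Finally I would invoke \Cref{fact:eig} with $n=PL$ and $\lambda=1$: the matrix $I_{PL}+J_{PL}$ has top eigenvalue $PL+1$ with multiplicity one and eigenvalue $1$ with multiplicity $PL-1$, so $M^\top M$ has top eigenvalue $2^{PL-2}(PL+1)$ and the remaining $PL-1$ eigenvalues equal to $2^{PL-2}$. Since these are precisely the nonzero eigenvalues of $\mathbf{K}^*$ and there are exactly $PL$ of them, the lemma follows. There is no genuine obstacle here; the only points that need care are fixing a consistent row ordering so that the identity $\mathbf{K}^*=MM^\top$ holds literally, and citing the standard fact that $MM^\top$ and $M^\top M$ have the same nonzero spectrum, which together with the rank count pins down the entire eigenvalue list of $\mathbf{K}^*$.
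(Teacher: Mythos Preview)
Your proposal is correct and follows essentially the same route as the paper: write $\mathbf{K}^*=\mathbf{U}\mathbf{U}^\top$ with $\mathbf{U}$ the $2^{PL}\times PL$ matrix of flattened binary vectors, compute $\mathbf{U}^\top\mathbf{U}=2^{PL-2}(I_{PL}+J_{PL})$ by the same bit-counting argument, and then invoke \Cref{fact:eig}. The only cosmetic difference is that you argue the rank separately via the standard basis vectors, whereas the paper reads it off from the eigenvalue computation.
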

\begin{proof}
Let $\mathbf{U}\in\R^{2^{PL}\times PL}$ denote the matrix whose rows are obtained by flattening elements of $\{0,1\}^{P\times L}$ (i.e., the query set $\mathcal{Q}$ and document set $\mathcal{D}$).
It then holds that $\mathbf{K}^*=\mathbf{U}\mathbf{U}^\top$.
We will analyze the spectrum of $\mathbf{K}^*$ by considering $\mathbf{U}^\top\mathbf{U}$, since it has the same eigenvalues as $\mathbf{U}\mathbf{U}^\top$.

We claim that $\mathbf{U}^\top\mathbf{U}=2^{PL-2}(I_{PL}+J_{PL})$.
First consider diagonal entries of $\mathbf{U}^\top\mathbf{U}$.
For any $1\le i\le PL$, it holds that $\mathbf{U}_{:,i}$ has half entries equal to $0$, and the other half entries equal to $1$.
As a result, $(\mathbf{U}^\top\mathbf{U})_{i,i}=2^{PL-1}$.
Next we consider off-diagonal entries of $\mathbf{U}^\top\mathbf{U}$.
For any $1\le i,j\le PL$ and $i\ne j$, it holds that $U_{k,i}=U_{k,j}=1$ for $1/4$ of all positions $k$; therefore $(\mathbf{U}^\top\mathbf{U})_{i,j}=2^{PL-2}$.
This proves our claim.

The claim of \Cref{fact:K^*} then follows from \Cref{fact:eig}.
\end{proof}

Now we can prove \Cref{fact:de_neg_emp}
\begin{proof}[Proof of \Cref{fact:de_neg_emp}]
Let $T_1:\R^{P\times L}\to\R^O$ denote an arbitrary mapping; in particular, it could represent a Transformer with positional encodings which maps a query $\mathbf{X}\in\mathcal{Q}$ to an $O$-dimensional embedding vector.
Furthermore, let $T_1(\mathcal{Q})\in\R^{2^{PL}\times O}$ denote the embeddings of all queries given by $T_1$.
Similarly, let $T_2:\R^{P\times L}\to\R^O$ denote an arbitrary mapping which represents the document encoder, and let $T_2(\mathcal{D})\in\R^{2^{PL}\times O}$ denote embeddings of all documents given by $T_2$.
The matrix of dot-product DE scores is then given by $\mathbf{K}^{\rm de}:=T_1(\mathcal{Q})T_2(\mathcal{D})^\top$.

By definition, $\mathbf{K}^{\rm de}$ has rank at most $O$.
If $O\le PL-1$, then \Cref{fact:K^*} implies that 
\begin{equation*}
   \frac{1}{2^{2PL}} \|\mathbf{K}^*-\mathbf{K}^{\rm de}\|_F^2\ge \frac{1}{2^{2PL}}(2^{PL-2})^2\ge\frac{1}{16}.
\end{equation*}
\end{proof}

Then we extend \Cref{fact:de_neg_emp} to \Cref{fact:de_neg}.
\begin{proof}[Proof of \Cref{fact:de_neg}]
Recall that the domain of the ground-truth score $K^*$ defined in \eqref{eq:K^*_emp} is $\{0,1\}^{P\times L}\times\{0,1\}^{P\times L}$.
We first extend its domain to $[0,1]^{P\times L}\times[0,1]^{P\times L}$ by quantizing the inputs: given $\mathbf{X}\in[0,1]^{P\times L}$, its quantized version $\widehat{\mathbf{X}}\in\{0,1\}^{P\times L}$ is obtained by mapping all entries less than $1/2$ to $0$ and other entries to $1$. 
Similarly, given $\mathbf{Y}\in[0,1]^{P\times L}$, we can define its quantized version $\widehat{\mathbf{Y}}\in\{0,1\}^{P\times L}$.
We then let $K^*(\mathbf{X},\mathbf{Y})=K^*(\widehat{\mathbf{X}},\widehat{\mathbf{Y}})=\mathrm{tr}(\widehat{\mathbf{X}}^\top\widehat{\mathbf{Y}})$.
Note that $K^*$ defined in this way is not yet continuous; later we will replace it with a continuous ground-truth function, but we will first use $K^*$ below since it simplifies the analysis.

Let $T_1:\R^{P\times L}\to\R^O$ and $T_2:\R^{P\times L}\to\R^O$ denote arbitrary mappings.
Let
\begin{equation*}
    \mathbb{M}:=\left\{\mathbf{Z}\in\R^{P\times L}\middle|Z_{i,j}=0\textrm{ or }1/2,1\le i\le P,1\le j\le L\right\}.
\end{equation*}
For $\mathbf{Z}\in\mathbb{M}$, let $\mathbb{C}_{\mathbf{Z}}:=\prod_{i=1}^{P}\prod_{j=1}^L[Z_{i,j},Z_{i,j}+1/2]$.

Now we want to find a lower bound on 
\begin{align}
     & \ \int_{\mathbf{X}\in[0,1]^{P\times L},\mathbf{Y}\in[0,1]^{P\times L}}\left(T_1(\mathbf{X})^\top T_2(\mathbf{Y})-K^*(\mathbf{X},\mathbf{Y})\right)^2{\rm d}\mathbf{X}{\rm d}\mathbf{Y} \nonumber \\
    = & \ \sum_{\mathbf{Z},\mathbf{Z}'\in\mathbb{M}}\int_{\mathbf{X}\in\mathbb{C}_{\mathbf{Z}},\mathbf{Y}\in\mathbb{C}_{\mathbf{Z}'}}\left(T_1(\mathbf{X})^\top T_2(\mathbf{Y})-K^*(\mathbf{X},\mathbf{Y})\right)^2{\rm d}\mathbf{X}{\rm d}\mathbf{Y} \nonumber \\
    = & \ \int_{\mathbf{X}\in\mathbb{C}_{\mathbf{0}},\mathbf{Y}\in\mathbb{C}_{\mathbf{0}}}\sum_{\mathbf{Z},\mathbf{Z}'\in\mathbb{M}}\left(T_1(\mathbf{X}+\mathbf{Z})^\top T_2(\mathbf{Y}+\mathbf{Z}')-K^*(\mathbf{X}+\mathbf{Z},\mathbf{Y}+\mathbf{Z}')\right)^2{\rm d}\mathbf{X}{\rm d}\mathbf{Y}, \label{eq:de_neg_tmp}
\end{align}
where we let $\mathbf{0}$ denotes the $P$-by-$L$ matrix whose entries are all $0$.
Note that in \eqref{eq:de_neg_tmp}, for any $\mathbf{X},\mathbf{Y}\in\mathbb{C}_{\mathbf{0}}$, the error can be lower bounded by $2^{2PL}/16$ using the proof of \Cref{fact:de_neg_emp}.
Therefore we have 
\begin{align*}
    \eqref{eq:de_neg_tmp} & \ge\int_{\mathbf{X}\in\mathbb{C}_{\mathbf{0}},\mathbf{Y}\in\mathbb{C}_{\mathbf{0}}}\frac{2^{2PL}}{16}{\rm d}\mathbf{X}{\rm d}\mathbf{Y} \\
     &=\frac{2^{2PL}}{16}\cdot{\sf vol}(\mathbb{C}_{\mathbf{0}})^2 \\
     & =\frac{1}{16}.
\end{align*}

As mentioned above, $K^*$ is not continuous, and the final step of the proof is to replace it with a continuous ground-truth function.
Previously, we quantize the input by transforming entries less than $1/2$ to $0$ and other entries to $1$. 
Now we use the following transformation function $\phi_\tau$: $\phi_\tau(z)=0$ if $z\le\frac{1}{2}-\tau$, and $\phi_\tau(z)=1$ if $z\ge\frac{1}{2}+\tau$, and otherwise $\phi_\tau(z)=\frac{1}{2}+\frac{1}{2\tau}(z-\frac{1}{2})$.
Given $\mathbf{X},\mathbf{Y}\in[0,1]^{P\times L}$, we apply $\phi_\tau$ to every entry of $\mathbf{X},\mathbf{Y}$ and get $\phi_\tau(\mathbf{X})$ and $\phi_\tau(\mathbf{Y})$, and define $K_\tau^*(\mathbf{X},\mathbf{Y})$
\begin{equation*}
    K_\tau^*(\mathbf{X},\mathbf{Y}):=\mathrm{tr}(\phi_\tau(\mathbf{X})^\top\phi_\tau(\mathbf{Y})).
\end{equation*}
Note that $K^*_\tau$ is continuous for any $\tau$, and as $\tau$ goes to $0$, it holds that $K^*_\tau$ becomes arbitrarily close to $K^*$ in $\ell_2$ distance. Therefore there exists a small enough $\tau$ such that $K^*_\tau$ satisfies the requirements of \Cref{fact:de_neg}.
\end{proof}

\end{document}